\newtheorem{assumption}{Assumption}
\newtheorem{remark}{Remark}
\newtheorem{theorem}{Theorem}
\newtheorem{lemma}{Lemma}
\newcommand{\bgeq}{\begin{equation}}
\newcommand{\edeq}{\end{equation}}
\newcommand{\bgdm}{\begin{displaymath}}
\newcommand{\eddm}{\end{displaymath}}
\begin{document}

\title{Fault Detection and Isolation of Uncertain Nonlinear Parabolic PDE Systems\thanks{This work was supported in part by the National Science Foundation under Grant CMMI-1929729.}
}
%
%
%

\author{Jingting Zhang, Chengzhi Yuan, Wei Zeng, Cong Wang 
\thanks{J. Zhang and C. Yuan are both with the Department of Mechanical, Industrial and Systems Engineering, University of Rhode Island, Kingston, RI 02881, USA (e-mail: { jingting\_zhang@uri.edu; cyuan@uri.edu})}
\thanks{W. Zeng is with the School of Mechanical and Eletrical Engineering, Longyan University, Longyan 364012, China (e-mail: {zw0597@126.com})}
\thanks{C. Wang is with the School of Control Science and Engineering, Shandong University, Jinan 250061, China (e-mail: {wangcong@sdu.edu.cn})}
}


\maketitle

\begin{abstract}
This paper proposes a novel fault detection and isolation (FDI) scheme for distributed parameter systems modeled by a class of parabolic partial differential equations (PDEs) with nonlinear uncertain dynamics. 
A key feature of the proposed FDI scheme is its capability of dealing with the effects of system uncertainties for accurate FDI. 
Specifically, an approximate ordinary differential equation (ODE) system is first derived to capture the dominant dynamics of the original PDE system. 
An adaptive dynamics identification approach using radial basis function neural network is then proposed based on this ODE system, so as to achieve locally-accurate identification of the uncertain system dynamics under normal and faulty modes.  
A bank of FDI estimators with associated adaptive thresholds are finally designed for real-time FDI decision making. 
Rigorous analysis on the FDI performance in terms of fault detectability and isolatability is provided. 
Simulation study on a representative transport-reaction process is conducted to demonstrate the effectiveness and advantage of the proposed approach. 

\end{abstract}

\begin{IEEEkeywords}
Partial differential equations, fault detection and isolation, adaptive dynamics identification, deterministic learning, neural network, distributed parameter systems.
\end{IEEEkeywords}

\IEEEpeerreviewmaketitle

\section{Introduction}

\IEEEPARstart{D}{istributed} parameter systems (DPSs) are dynamical systems with inputs, outputs, and process parameters that may vary temporally and spatially \cite{SonHL.TCYB18, MonY.CNSNS12, XuFY.TCYB21}, which are usually modeled by partial differential equations (PDEs).  
Some typical examples include fluid flow process \cite{HerM.IS99}, biological process \cite{BouCK.SP15}, convection diffusion reaction process \cite{ElFG.AIChE07} and thermal process \cite{LuZH.TII16}. 
Particularly, due to the ever-increasing technical demands, fault diagnosis of DPSs has been an area of significantly growing interests. 
It is a critical step to realize fault tolerant operations for minimizing performance degradation and avoiding dangerous situations, such that safety and reliability of DPSs can be guaranteed.  
To this end, the past decades have witnessed tremendous progress in the research of fault diagnosis for DPSs, leading to a large variety of methods, see, e.g., \cite{RusCB.SSBM12, DemA.JRNC12, CaiFS.AUT16, FisD.AUT20, FenWW.TCYB21} and the references therein. 

As opposed to the substantially growing body of literature on fault detection (FD) of DPSs (e.g., \cite{RusCB.SSBM12, DemA.JRNC12, CaiFS.AUT16, DeyPM.AUT19, ZhaY.CSL20}), study on the fault isolation (FI) problem has gained quite limited success, especially for those DPSs with nonlinear unstructured uncertain dynamics. 
Some research efforts have been devoted to the development of FI methods for DPSs with precisely known models. For example, 
\cite{BanK.ACC12} proposed an FI scheme using a finite-dimensional geometric approach. 
In \cite{ElFG.AIChE07}, the FI problem for DPSs with various actuator faults has been investigated.
For the FI problem of DPSs with nonlinear uncertain dynamics, the research is still under-explored. 
One of the technical difficulties is that the dynamics of faults occurring in the system could be hidden within the system's general uncertain dynamics (e.g., unmodeled uncertainties), such that the fault feature could not be accurately identified for FI purpose. 
Some attempts have been made to overcome this difficulty. 
The FI scheme proposed in \cite{CaiJ.ACC16} is able to distinguish the effects between occurring fault and system uncertainties. 
\cite{GhaE.AUT09, GhaE.CDC07} developed a Lyapunov function-based FI scheme for DPSs, in which system uncertainties were handled by active control strategies. 
However, all these existing schemes have not appropriately dealt with the system uncertainties in the sense that occurring faults are typically required to have sufficiently large magnitudes (e.g., larger than those of the system uncertainties), limiting their wider applicability in practice.

To overcome the above deficiencies, a promising strategy is to realize  accurate modeling  of  the system uncertain dynamics. Adaptive neural network (NN)--as commonly used in the field of control and modeling of DPSs with uncertain dynamics (see, e.g., \cite{WuL.TNN08, QiL.CES09, ZhaTLJ.TNNLS16})--provides a powerful tool for this purpose. 
However, different from the control problem, where modeling errors of NN can typically be compensated by the controllers, using NN for accurate fault detection and isolation (FDI) of DPSs is rather challenging. 
This is because NN approximation errors often have negative impacts on the FDI residual signals, which cannot be structured and decoupled from the occurring fault, leading to possible misjudgment of FDI. 
To minimize such effects of NN approximation errors, a key technical challenge is to satisfy the so-called persistently exciting (PE) condition of associated NN regressor vectors \cite{YuaW.SCL11}. 
Recently, the deterministic learning (DL) theory proposed in \cite{dlt09wChD} has demonstrated that with radial basis function neural networks (RBF NN), almost any recurrent trajectory can result in the satisfaction of a partial PE condition. 
As a result, locally accurate RBF NN identification of nonlinear unstructured  uncertain dynamics can be achieved along the recurrent trajectory, and approximation errors of the associated NN can be guaranteed arbitrarily small \cite{WanH.TNN07}.  
With this important property, the DL theory has been recognized in recent years as a new and effective paradigm for the design of FDI schemes for general nonlinear uncertain systems, see \cite{CheW.IJACSP14, CheWH.TNNLS13, ZhaY.ACCESS19, ZhaYSHW.TCYB19}. 
This has opened new doors to the field though, existing DL-based FDI schemes still suffer from some limitations.
For example, the FI methods of \cite{CheW.IJACSP14, CheWH.TNNLS13, ZhaY.ACCESS19} require the dynamics of occurring faults to perfectly match those pre-defined/pre-trained faults.  
However, in many practical situations, e.g., when the system suffers disturbances resulting in changes of initial conditions or system parameters, the dynamics of occurring fault often exhibit some differences from those of the matched fault, which thus could result in missed/false alarm under the FI schemes of \cite{CheW.IJACSP14, CheWH.TNNLS13, ZhaY.ACCESS19}. 
Furthermore, virtually all of these existing FDI methods are focused on lumped parameter systems (LPSs) modeled by finite-dimensional ordinary differential equations (ODEs), which cannot be directly applied for infinite-dimensional DPSs as considered in the current paper. 

To extend the DL theory to the FDI problem of DPSs, one promising strategy is to use a finite set of ODEs to approximate the PDE model based on model reduction methods \cite{LiQ.JPC10}. 
Conventional spatial discretization-based approaches often lead to a high-order ODE system \cite{BenOCW.SIAM17}, which however could  be computationally expensive for real-time implementation. 
An alternative approach is based on the Galerkin method \cite{MonY.CNSNS12, WuL.TNN08, ChrD.JMAA97}. 
The key idea of the Galerkin method is as follows. It is known that for dissipative parabolic DPSs, the eigenspectrum of the associated spatial differential operator can be partitioned into a finite set of slow eigenvalues and an infinite set of fast but stable eigenvalues \cite{DenLC.TCST05}. 
By neglecting the fast stable components, a low-order ODE system can be obtained to approximate the dominant dynamics of the PDE system, which then could be utilized to facilitate the subsequent FDI design.   
It is worth mentioning that for FDI of DPSs, only a few research results have been obtained in \cite{BanK.ACC12, ElFG.AIChE07, GhaE.AUT09, GhaE.CDC07}, which unfortunately are applicable only to some special types of faults such as actuator faults, but cannot address the FDI problem for DPSs with more general faults. 

In this paper, we aim to investigate effective detection and isolation approaches for general faults occurring in DPSs modeled by a class of parabolic PDEs with nonlinear unstructured uncertain dynamics. A novel Galerkin-DL-based FDI scheme will be proposed. 
Specifically, with the Galerkin method, an approximate ODE model is first derived to capture the dominant dynamics of the PDE system. A DL-based adaptive dynamics identification approach is then developed based on this ODE system to realize  locally-accurate identification of the uncertain system dynamics under normal and all faulty modes. The associated knowledge can be obtained and stored in constant RBF NN models.
Afterwards, a bank of FDI estimators are designed with these constant models, where the FD estimators are used to detect the occurrence of a fault, and the FI estimators (which will be activated once the occurring fault is detected) are used to identify the type of the occurring fault.  
Their generated residuals can be used to characterize the dynamics of the occurring fault and distinguish it from the system uncertain dynamics for accurate FDI. 
Adaptive thresholds associated with such FDI residuals are further designed to facilitate real-time FDI decision making. 
In particular, to address the aforementioned robustness issue encountered by the FI methods of \cite{CheW.IJACSP14, CheWH.TNNLS13, ZhaY.ACCESS19}, novel adaptive thresholds instead of fixed/constant thresholds are designed in this paper, 
such that successful isolation can still be guaranteed even when the occurring fault does not exactly match any of the pre-trained faults. As such, the proposed  FI scheme possesses improved robustness against slight deviations of fault dynamics due possibly to unexpected system changes in, for example, initial conditions and system parameters, as discussed above.  
We stress that our FDI scheme does not require the faults to be of any special type (e.g., actuator faults as required in \cite{GhaE.AUT09, GhaE.CDC07, BanK.ACC12} and/or sensor faults as required in \cite{CaiJ.ACC16}), but is applicable to general system faults. 
Rigorous analysis on the FDI performance is conducted to demonstrate that our approach develops better fault detectability and isolatability compared to the existing methods of \cite{CaiJ.ACC16, ElFG.AIChE07, ArmD.AIChE08, DemA.JRNC12}. 
Moreover, extensive simulations applied to a representative transport-reaction process are also conducted to demonstrate the effectiveness and advantage of the proposed new methodologies. 

It should be pointed out that this research work significantly expands our previous work \cite{ZhaY.CSL20} which was focused on only the FD problem of uncertain parabolic PDE systems; while in this paper, we consider both the FD and FI problems. 
In addition, new adaptive thresholds are proposed for more accurate and efficient FDI, which advance the fixed/constant thresholds proposed in \cite{ZhaY.CSL20}. 
%
Furthermore, this paper also provides rigorous analysis to characterize the properties of the proposed FDI scheme, which include: (i) fault detectability conditions characterizing the class of faults that can be detected, and (ii) fault isolatability conditions characterizing the class of faults that can be isolated.

The contributions of this paper are summarized as follows: 
(i) The FDI problem for uncertain nonlinear parabolic PDE systems is addressed using a novel Galerkin-DL-based adaptive dynamics identification approach, which can achieve locally-accurate identification of the dominant uncertain dynamics of the PDE system; 
(ii) New adaptive-threshold-based FDI decision making schemes are proposed, which are capable of dealing with general faults occurring in parabolic PDE systems, including those faults that do not exactly match the pre-defined/pre-trained faults; 
(iii) Rigorous analysis on FDI performance, including fault detectability and isolatability conditions, is provided to demonstrate the effectiveness of the proposed  approaches.

The remainder of this paper is organized as follows. Some preliminaries and the problem statement are provided in Section \ref{sec_preli-problem}. The DL-based adaptive dynamics learning approach is presented in Section \ref{sec_identification}. The FD scheme is proposed in Section \ref{sec_FD}, and the FI scheme is given in Section \ref{sec_FI}. Simulation studies are presented in Section \ref{sec_simulation}. Section \ref{sec_conclusion} concludes the paper. 

\noindent \textbf{Notation.} $\mathbb{R}$, $\mathbb{R}_+$ and $\mathbb{N}_+$ denote, respectively, the set of real numbers, the set of positive real numbers and the set of positive integers; $\mathbb{R}^{m \times n}$ denotes the set of $m \times n$ real matrices; $\mathbb{R}^n$ denotes the set of $n \times 1$ real column vectors; 
$|\cdot|$ is the absolute value of a real number; $\left\|\cdot\right\|$ is the 2-norm of a vector or a matrix, i.e. $\left\|x\right\|=(x^\top x)^{\frac{1}{2}}$.  

\section{Preliminaries and Problem Formulation} \label{sec_preli-problem}
\subsection{Radial Basis Function Neural Networks} \label{sec_RBFNN}

The RBF NNs can be described by $f_{nn}(Z)=\sum_{i=1}^{N_n}w_is_i(Z)=W^\top S(Z)$ \cite{tto92pM}, where $Z\in \Omega_Z \subset \mathbb{R}^q$ is the input vector, $W=[w_1,\cdots,w_{N_n}]^\top \in \mathbb{R}^{N_n}$ is the weight vector, $N_n$ is the NN node number, and $S(Z)=[s_1(\|Z-\varsigma_1\|),\cdots,s_{N_n}(\|Z-\varsigma_{N_n}\|)]^\top $, with $s_i(\cdot)$ being a radial basis function, and $\varsigma_i$ $(i=1,2,\cdots,N_n)$ being distinct points in state space. The Gaussian function $s_i(\|Z-\varsigma_i\|)=\exp[\frac{-(Z-\varsigma_i)^\top (Z-\varsigma_i)}{\nu_i^2}]$ is one of the most commonly used radial basis functions, where $\varsigma_i=[\varsigma_{i1},\varsigma_{i2},\cdots,\varsigma_{iq}]^\top $ is the center of the receptive field and $\nu_i$ is the width of the receptive field. The Gaussian function belongs to the class of localized RBFs in the sense that $s_i(\|Z-\varsigma_i\|)\rightarrow 0$ as $\|Z\|\rightarrow \infty$. It is noted that $S(Z)$ is bounded, i.e., there exists a real constant $S_M \in \mathbb{R}_+$ such that $ \|S(Z)\|\leqslant S_M$ \cite[Lemma 2.1]{dlt09wChD}.
It has been shown in \cite{tto92pM} that for any continuous function $f(Z):\Omega_Z \rightarrow \mathbb{R}$ where $\Omega_Z\subset \mathbb{R}^q$ is a compact set, and for the NN approximator, where the node number $N_n$ is sufficiently large, there exists an ideal constant weight vector $W^*$, such that for any $\epsilon^*>0$, $f(Z)=W^{*\top}S(Z)+\epsilon,\,\forall Z\in \Omega_Z$, where $|\epsilon|< \epsilon^*$ is the ideal  approximation error. The ideal weight vector $W^*$ is an ``artificial'' quantity used for analysis, which is defined as the value of $W$ that minimizes $|\epsilon|$ for all $Z\in\Omega_Z\subset\mathbb{R}^q$, i.e., $W^*\triangleq arg\textup{min}_{W\in\mathbb{R}^{N_n}}\{\textup{sup}_{Z\in\Omega_Z}|f(Z)-W^\top S(Z)|\}$. Moreover, based on the localization property of RBF NNs \cite{dlt09wChD}, for any bounded trajectory $Z(t)$ within the compact set $\Omega_Z$, $f(Z)$ can be approximated by using a finite number of neurons located in a local region along the trajectory: $f(Z)=W^{*\top}_\zeta S_\zeta(Z)+\epsilon_\zeta$, where $\epsilon_\zeta$ is the approximation error, with $\epsilon_\zeta=O(\epsilon)=O(\epsilon^*)$, $S_\zeta(Z)=[s_{j1}(Z),\cdots,s_{j\zeta}(Z)]^\top \in\mathbb{R}^{N_\zeta}$, $W_\zeta^*=[w^*_{j1},\cdots,w^*_{j\zeta}]^\top \in\mathbb{R}^{N_\zeta}$, $N_\zeta<N_n$, and the integers $j_i=j_1,\cdots,j_\zeta$ are defined such that $|s_{j_i}(Z_p)|>\theta$ ($\theta>0$ is a small positive constant) for some $Z_p\in Z(t)$.
In addition, it is shown in \cite{dlt09wChD} that for a localized RBF network $W^\top S(Z)$ whose centers are placed on a regular lattice, almost any recurrent trajectory\footnote{A recurrent trajectory represents a large set of periodic and periodic-like trajectories generated from linear/nonlinear dynamics systems. A detailed characterization of recurrent trajectories can be found in \cite{dlt09wChD}.} $Z(t)$ can lead to the satisfaction of the PE condition of the regressor subvector $S_\zeta(Z)$. 
This result is summarized in the following lemma.
\begin{lemma}[\cite{dlt09wChD}] \label{lem_DL}
Consider any recurrent trajectory $Z(t)$ that remains in a bounded compact set $\Omega_Z\subset \mathbb{R}^q$. For RBF network $W^\top S(Z)$ with centers placed on a regular lattice (large enough to cover compact set $\Omega_Z$), the regressor subvector $S_\zeta(Z)$ consisting of RBFs with centers located in a small neighborhood of $Z(t)$ satisfies the PE condition. 
\end{lemma}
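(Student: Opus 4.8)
The plan is to combine three ingredients: the exponential spatial localization of the Gaussian RBFs, the translation-invariant geometry induced by placing the centers on a regular lattice, and the defining uniform return-time property of a recurrent trajectory. Recall that $S_\zeta(Z)$ satisfies the PE condition if there exist constants $\alpha_1, T_0 > 0$ such that $\int_{t}^{t+T_0} S_\zeta(Z(\tau)) S_\zeta^\top(Z(\tau))\, d\tau \geq \alpha_1 I$ for all $t \geq 0$. First I would reduce the problem to a finite-dimensional, local one: since each $s_i(\|Z-\varsigma_i\|)$ decays exponentially and the lattice spacing is fixed, at any instant only the finitely many neurons whose centers lie within a fixed radius of $Z(t)$ contribute non-negligibly, while the aggregate contribution of all the remaining neurons is uniformly exponentially small (quantified through the threshold $\theta$ used to define $S_\zeta$). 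Along the bounded trajectory this legitimizes working with the subvector $S_\zeta(Z)$, whose dimension $N_\zeta$ is uniformly bounded by the lattice density, and reduces the relevant Gram matrix to a finite one.

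Second, I would establish positive definiteness of this Gram matrix over a single excitation window. The key structural fact is that Gaussian RBFs with distinct centers are linearly independent, and in fact remain so when restricted to any nondegenerate arc of a $C^1$ curve lying in $\Omega_Z$: each $s_i(\|Z-\varsigma_i\|) = \exp[-(Z-\varsigma_i)^\top(Z-\varsigma_i)/\nu_i^2]$ is real-analytic in $Z$, so any nontrivial linear combination $c^\top S_\zeta(Z)$ is a nonzero real-analytic function whose zero set cannot contain an arc. Hence, if the trajectory segment $\{Z(\tau): \tau \in [t, t+T_0]\}$ contains such an arc inside $\Omega_Z$, then $c^\top S_\zeta(Z(\tau)) \equiv 0$ on $[t, t+T_0]$ forces $c = 0$, so $\int_{t}^{t+T_0} S_\zeta(Z(\tau)) S_\zeta^\top(Z(\tau))\, d\tau$ is positive definite; a quantitative refinement would estimate its smallest eigenvalue in terms of the window length, the trajectory's speed, the lattice spacing, and the widths $\nu_i$.

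The crucial step is promoting this to a lower bound uniform in $t$, and here the recurrence of $Z(t)$ enters. By the characterization of recurrent trajectories, $Z(t)$ revisits every $\varepsilon$-neighborhood of any of its points within a bounded return time $T(\varepsilon)$. Choosing $\varepsilon$ comparable to the lattice spacing and taking $T_0$ slightly larger than $T(\varepsilon)$, I would cover the closure of the trajectory by finitely many lattice cells and argue that on every window $[t, t+T_0]$ the trajectory sweeps through enough cells to excite all neurons whose centers lie near it. Because the local arrangement of lattice centers is the same up to translation throughout $\Omega_Z$, the ``shape'' of the local identification sub-problem is independent of the base point, so the finitely many smallest-eigenvalue estimates of the previous step admit a common positive lower bound $\alpha_1$; combining this with the bounded return time yields $\int_{t}^{t+T_0} S_\zeta(Z(\tau)) S_\zeta^\top(Z(\tau))\, d\tau \geq \alpha_1 I$ for all $t$, which is exactly the assertion of the lemma.

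I expect this last, uniform-in-$t$ step to be the main obstacle: turning the purely qualitative return-time property of a recurrent trajectory into a quantitative constant $\alpha_1$ that does not deteriorate as $t \to \infty$ or as the base point moves along the trajectory. This calls for a careful compactness and covering argument combined with the lattice translation invariance. A further technical point is that the exponentially small cross-coupling between the retained ``local'' neurons and the neglected ``far'' neurons --- dropped in the first step --- must be shown not to cancel the positive definiteness obtained in the second step; this is handled by choosing the localization threshold $\theta$ small enough and absorbing the resulting perturbation into $\alpha_1$.
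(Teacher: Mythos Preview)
The paper does not prove this lemma at all: it is quoted verbatim as a cited result from \cite{dlt09wChD} and used as a black-box preliminary, so there is no in-paper argument against which to compare your proposal.

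That said, your sketch has a genuine gap in the second step that would need to be repaired before this becomes a proof. You argue that a nontrivial linear combination $c^\top S_\zeta(Z)$ is a nonzero real-analytic function of $Z\in\mathbb{R}^q$ and ``its zero set cannot contain an arc,'' hence cannot vanish identically along the trajectory segment. But the zero set of a nonconstant real-analytic function on $\mathbb{R}^q$ is a proper analytic variety, which for $q\geq 2$ can easily contain smooth curves (indeed entire hypersurfaces). Nothing you have assumed prevents the recurrent trajectory $Z(t)$ from lying inside such a variety for all $t$; for instance, if $Z(t)$ evolves on a lower-dimensional invariant manifold, the restricted RBFs $s_i(\|Z(t)-\varsigma_i\|)$ may well become linearly dependent as functions of $t$ even though the $s_i$ are independent as functions on $\mathbb{R}^q$. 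This is precisely why the statement in the source (and in the paper's surrounding text) is qualified as holding for \emph{almost any} recurrent trajectory, a genericity caveat your argument does not exploit. The actual proof in \cite{dlt09wChD} proceeds instead through a spatial-sampling/interpolation argument: recurrence guarantees that over any sufficiently long window the trajectory passes near a set of points that is ``dense enough'' relative to the lattice spacing, and positive definiteness of the integrated Gram matrix is then inherited from positive-definiteness properties of Gaussian interpolation matrices at scattered nodes, with the excitation level quantified via the fill distance. Your localization and uniform-return-time ingredients are the right scaffolding, but the linear-independence step needs to be replaced by (or reduced to) such a sampling argument.
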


\subsection{Problem Formulation} \label{sec_problem}

Consider a class of nonlinear parabolic PDE systems in one spatial dimension with a state-space description in the form of: 
\bgeq \label{PDE_sys}
\begin{aligned}
	\frac{\partial x(z,t)}{\partial t} =&\, a_1\frac{\partial x(z,t)}{\partial z}+a_2\frac{\partial^2 x(z,t)}{\partial z^2}+ f(x,u) \\
	&\, + \beta(t-t_0)\phi^k(x,u), 
\end{aligned}
\edeq
subject to the following boundary conditions and initial condition:
\bgeq \label{PDE_sys-bound-initial}
\begin{aligned}
	& m_i x(z_i,t) + n_i \frac{\partial x}{\partial z} (z_i,t) = d_i, \quad i=1,2 \\
	& x(z,0) = x_0(z),
\end{aligned}
\edeq
where $x(z,t) \in \mathbb{R}$ is system state; $u \in \mathbb{R}^q$ is system input; $z \in [z_1,z_2] $ is the spatial coordinate; $ t \in [0,\infty)$ is the time; $f(x,u) \in \mathbb{R}$ and $\phi^k(x,u) \in \mathbb{R}$ are unknown nonlinear functions satisfying locally Lipschitz continuous, which represent nonlinear uncertain system dynamics and deviations in system dynamics due to fault $k \in \{1,2,\cdots,N\}$ ($N \in \mathbb{N}_+$), respectively; $\beta(t-t_0)$ is the time profile of the occurring fault, with $\beta(t-t_0)=0$ for $t < t_0$ and $\beta(t-t_0)=1$ for $t \geq t_0$; $t_0$ is the unknown fault occurrence instant; $\frac{\partial x}{\partial z}$ and $\frac{\partial^2 x}{\partial z^2}$ are the first-order and second-order spatial derivatives of $x(z,t)$, respectively;  $a_1,a_2,m_1,m_2,n_1,n_2,d_1,d_2$ are known constants. 
In this paper, it is assumed that the system state $x(z,t)$ is measurable at all locations $z \in [z_1,z_2]$ for all time $t \in [0,\infty)$.

\begin{assumption} \label{ASS_recurrent}
For the PDE system (\ref{PDE_sys})--(\ref{PDE_sys-bound-initial}), the system input $u(t)$ and state $x(z_0,t)$ at any spatial point $z_0 \in [z_1,z_2]$ are bounded and recurrent for all $t \in [0,\infty)$.  
\end{assumption}

Denote $\mathcal{H} $ as a Hilbert space of 1-D functions defined on $[z_1,z_2]$ that satisfies the boundary conditions (\ref{PDE_sys-bound-initial}), with inner product and norm: $\left\langle \zeta_1, \zeta_2 \right\rangle = \int_{z_1}^{z_2} \zeta_1(z) \zeta_2(z) dz,\, \left\| \zeta_1 \right\|_2 = \left\langle \zeta_1,\zeta_1 \right\rangle ^{\frac{1}{2}} $, where $\zeta_1(z),\,\zeta_2(z)$ are two elements of $\mathcal{H}$. 
According to \cite{ElFG.AIChE07, WuL.TNN08}, the PDE system (\ref{PDE_sys})--(\ref{PDE_sys-bound-initial}) can be formulated as an infinite-dimensional system: 
\bgeq \label{ODE_sys-infinite}
\dot{\chi} = \mathcal{A}\chi + f(\chi,u) + \beta(t-t_0) \phi^k(\chi,u), \quad \chi(0) = \chi_0,
\edeq
where $\chi(t)=x(z,t)$ is the state function defined in $\mathcal{H}$, 
and $\mathcal{A}$ is a differential operator in $\mathcal{H}$ defined as $\mathcal{A}x=a_1\frac{\partial x}{\partial z} + a_2\frac{\partial^2 x}{\partial z^2}, \, x\in D(\mathcal{A}):= \{x \in \mathcal{H} \,|\, \mathcal{A} x \in \mathcal{H},\, m_i x(z_i,t) + n_i \frac{\partial x}{\partial z} (z_i,t) = d_i, \, i=1,2\} $.
%
For the operator $\mathcal{A}$, the eigenvalue problem is defined as $\mathcal{A} \varphi_j = \lambda_j \varphi_j$ ($j=1,2,\cdots, \infty$), where $\lambda_j$ denotes an eigenvalue, and $\varphi_j$ denotes an eigenfunction. The eigenspectrum of $\mathcal{A}$, denoted by $\sigma(\mathcal{A})$, is defined as the set of all eigenvalues of $\mathcal{A}$, i.e., $\sigma(\mathcal{A})=\{\lambda_1,\lambda_2,\cdots,\lambda_{\infty}\}$. According to \cite{ElFG.AIChE07, WuL.TNN08}, for highly-dissipative PDE systems, the eigenspectrum of $\mathcal{A}$ can be partitioned into a finite-dimensional part consisting of $m$ ($m \in \mathbb{N}_+$) slow eigenvalues and a stable infinite-dimensional complement containing the remaining fast eigenvalues, and the separation between the slow and fast eigenvalues of $\mathcal{A}$ is large. These properties can be  satisfied by the majority of diffusion-convection-reaction processes \cite{ElFG.AIChE07}, and are formalized in the following assumption.  

\begin{assumption} \label{ASS_eigenspectrum}
(i) 
$\textup{Re}\{\lambda_1\} \geq \textup{Re}\{\lambda_2\} \geq \cdots \geq \textup{Re}\{\lambda_j\} \geq \cdots$, where $\textup{Re}\{\lambda_j\}$ denotes the real part of $\lambda_j$; 
(ii) 
$\sigma(\mathcal{A})$ can be partitioned as $\sigma(\mathcal{A})=\sigma_s(\mathcal{A})+\sigma_f(\mathcal{A})$, where $\sigma_s(\mathcal{A})$ consists of the first $m$ number of eigenvalues, that is, $\sigma_s(\mathcal{A})=\{\lambda_1,\lambda_2,\cdots,\lambda_m\}$, and $\left|\frac{\textup{Re}\{\lambda_1\}}{\textup{Re}\{\lambda_{m}\}} \right|=O(1)$;
(iii) 
$\textup{Re}\{\lambda_{m+1}\} < 0 $ and $\left|\frac{\textup{Re}\{\lambda_m\}}{\textup{Re}\{\lambda_{m+1}\}}\right|=O(\iota)$, where $\iota:= \left|\frac{\textup{Re}\{\lambda_1\}}{\textup{Re}\{\lambda_{m+1}\}} \right|<1$ is a small positive constant.

\end{assumption}

Based on this assumption, consider the decomposition $\mathcal{H}=\mathcal{H}_s \oplus \mathcal{H}_f$, in which $\mathcal{H}_s=\textup{span}\{\varphi_1, \varphi_2,\cdots,\varphi_m \}$ denotes the finite dimensional space spanned by the slow eigenfunctions corresponding to $\sigma_s(\mathcal{A})$, and $\mathcal{H}_f=\textup{span}\{\varphi_{m+1}, \varphi_{m+2},\cdots, \varphi_{\infty} \}$ denotes the infinite dimensional complement one spanned by the fast eigenfunctions corresponding to $\sigma_f(\mathcal{A})$. 
Under such a decomposition and through separation of time and spatial variables \cite{ElFG.AIChE07, WuL.TNN08}, the PDE system (\ref{ODE_sys-infinite}) can be rewritten in the following equivalent form:
\bgeq \label{ODE_sys-sf0}
\begin{aligned}
	&\dot{x}_s = A_s x_s + f_s(x_s,\chi_f,u) + \beta(t-t_0) \phi_s^k(x_s,\chi_f,u),\\
	&\dot{\chi}_f = A_f \chi_f + f_f(x_s,\chi_f,u) + \beta(t-t_0) \phi_f^k(x_s,\chi_f,u),\\
	&x_s(0) = x_{s_0}, \, \chi_f(0)=\chi_{f_0},
\end{aligned}
\edeq
where $ x_s = [x_{s_1}, \chi_{s_{2}}, \cdots,x_{s_m}]^\top \in \mathbb{R}^m$, $\chi_f = [\chi_{f_{m+1}},\cdots,$\\$\chi_{f_{\infty}}]^\top \in \mathbb{R}^\infty $, $ A_s = \textup{diag}\{\lambda_1,\cdots,\lambda_m\}$, $f_s=\left\langle \varphi_s, f \right\rangle$, $\phi_s^k = \left\langle \varphi_s, \phi^k \right\rangle$, $x_{s_0} = \left\langle \varphi_s, \chi_0 \right\rangle$, $ A_f = \textup{diag}\{\lambda_{m+1},\cdots,\lambda_{\infty}\}$, $f_f=\left\langle \varphi_f, f \right\rangle$, $\phi_f^k = \left\langle \varphi_f, \phi^k \right\rangle$, $\chi_{f_0} = \left\langle \varphi_f, \chi_0 \right\rangle$ with $\varphi_s = [\varphi_1, \cdots,\varphi_m]^\top$ and  $\varphi_f = [\varphi_{m+1}, \cdots,\varphi_\infty]^\top$. 
%
%
By neglecting the fast modes, we can obtain the following finite-dimensional ODE model  to characterize the dominant dynamics of the PDE system in (\ref{ODE_sys-infinite}):   
\bgeq \label{ODE_sys} 
	\dot{x}_s = A_s x_s + f_s(x_s,u) + \beta(t-t_0) \phi_s^k(x_s,u), \, x_s(0) = x_{s_0}. 
\edeq

\begin{remark}
The process of model-reduction in Eqs. (\ref{ODE_sys-infinite})--(\ref{ODE_sys}) is based on the Galerkin method, as adopted in existing works \cite{ElFG.AIChE07, WuL.TNN08}, which is included here for the completeness of presentation. 
Note that such a process is not the major contribution of this paper, thus its thorough analysis is not provided here. Interested readers are referred to \cite{DenLC.TCST05, ElFG.AIChE07, WuL.TNN08} for more details. 
\end{remark}


In the next sections, a novel FDI scheme will be proposed based on the ODE system (\ref{ODE_sys}), so as to achieve accurate detection and isolation for the occurring fault $\phi_s^k(x_s,u)$. 
Note that since the functions $f_s(x_s,u)$ and $\phi_s^k(x_s,u)$ are both unknown, the model (\ref{ODE_sys}) cannot be directly used for the FDI design. 
In view of this, the FDI scheme proposed in this paper will consist of three components: 
(i) adaptive dynamics learning, to achieve locally-accurate identification of the uncertain dynamics $f_s(x_s,u)$ and $\phi_s^k(x_s,u)$ in system (\ref{ODE_sys}) under the normal mode and all faulty modes; 
(ii) FD scheme, to achieve rapid detection of fault occurrence; 
and (iii) FI scheme, to realize accurate fault isolation, which will be activated once the occurring fault is detected. 
%

%

\section{Identification of System Uncertain Dynamics} \label{sec_identification}

In this section, a DL-based adaptive dynamics learning approach will be developed to achieve accurate identification of the uncertain dynamics $f_{s}(x_s,u)$ and $\phi^k_{s}(x_s,u)$ in system (\ref{ODE_sys}) under all normal and faulty modes. 

Consider the following faulty dynamic systems: 
\bgeq \label{Identify_sys0}
	\dot{x}_{s} = \, A_s x_{s} + f_{s}(x_s,u)+\phi^k_{s}(x_s,u), 
\edeq
where $k =0,1,\cdots, N$ denotes the $k$-th faulty mode, with $k=0$ representing the normal mode, i.e., $\phi_{s}^0(x_s,u)\equiv 0$. 
Since the system uncertainty $f_{s}(x_s,u)$ and occurring fault $\phi^k_{s}(x_s,u)$ in (\ref{Identify_sys0}) cannot be decoupled, by considering them together and defining a general fault function $\eta^k(x_s,u):=f_{s}(x_s,u)+\phi^k_{s}(x_s,u)$, we can rewrite the system (\ref{Identify_sys0}) as:
\bgeq \label{Identify_sys}
\begin{aligned}
	\dot{x}_{s_i} = \, \lambda_i x_{s_i} + \eta^k_i(x_s,u), \quad i=1,2,\cdots,m. 
\end{aligned}
\edeq 

For the unknown function $\eta^k_i(x_s,u)$ in (\ref{Identify_sys}), according to the RBF NN approximation theory as presented in Section \ref{sec_RBFNN}, we know that there exists an ideal constant NN weight vector  $W_i^{k*} \in \mathbb{R}^{N_n }$ (with $N_n$ denoting the number of NN nodes) such that
\bgeq \label{Identify_funcapp}
	\eta_i^k(x_s,u)=W_i^{k*\top}S(x_s,u)+\varepsilon_{i_0}^k, 
\edeq
where $S(x_s,u):\, \mathbb{R}^{m} \times \mathbb{R}^q \rightarrow \mathbb{R}^{N_n}$ is a smooth RBF vector and $\varepsilon_{i_0}^k$ is the estimation error satisfying $|\varepsilon_{i_0}^k|<\varepsilon_i^{*}$ with $\varepsilon_i^{*}$ being a positive constant that can be made arbitrarily small given a sufficiently large number of neurons.
Based on this, an adaptive dynamics identifier can be constructed:
\bgeq \label{Identify_obs}
\begin{aligned}
	\dot{\hat{x}}_i =&\, -a_{i}(\hat{x}_i-x_{s_i})+\lambda_i x_{s_i}+\hat{W}_i^{k\top}S(x_s,u),  \\
	\dot{\hat{W}}_i^k =&\, -\sigma_{i} \Gamma_{i} \hat{W}_i^k - \Gamma_{i} (\hat{x}_i-x_{s_i})  S(x_s,u),
\end{aligned}
\edeq
for all $i=1,2,\cdots,m$ and $k=0,1,\cdots,N$, 
where $\hat{x}_i $ is the identifier state, $x_{s_i}$ is the state of system (\ref{Identify_sys}), $\hat{W}_i^k \in \mathbb{R}^{N_n}$ is the estimate of $W_i^{k*}$ in (\ref{Identify_funcapp}), $a_i>0$, $\Gamma_{i} = \Gamma_{i}^\top  >0$, $\sigma_{i} >0$ are design constants with $\sigma_{i}$ being a small number. 

\begin{theorem} \label{theo_identify}
Consider the adaptive learning system consisting of the plant (\ref{Identify_sys}) and the identifier (\ref{Identify_obs}). Under Assumption \ref{ASS_recurrent}, with initial condition $\hat{W}_i^k(0)=0$, for all $i=1,\cdots,m$ and $k=0,1,\cdots,N$, we have: (i) all signals in the system remain bounded; (ii) the estimation error $\left| \hat{x}_i-x_{s_i} \right|$ converges to a small neighborhood around the origin; 
and (iii) a locally-accurate approximation of the unknown function $\eta_{i}^k(x_s,u)$ is achieved by $\hat{W}_i^{k\top}S(x_s,u)$ as well as $\bar{W}_i^{k\top}S(x_s,u)$ along the recurrent system trajectory $(x_s,u)$, where $\bar{W}_i^k := \frac{1}{t_2-t_1} \int_{t_1}^{t_2} \hat{W}_i^k(\tau) d \tau $ with $[t_1,\, t_2]$ representing a time segment after the transient process.
\end{theorem}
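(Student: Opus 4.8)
\textit{Proof plan.} The plan is to run the standard deterministic-learning argument mode-by-mode, i.e.\ fix a faulty mode $k\in\{0,1,\dots,N\}$ and a modal index $i\in\{1,\dots,m\}$, and treat the closed loop (\ref{Identify_sys})--(\ref{Identify_obs}) as a perturbed linear time-varying error system. First I would define the state error $\tilde{x}_i:=\hat{x}_i-x_{s_i}$ and the weight error $\tilde{W}_i^k:=\hat{W}_i^k-W_i^{k*}$. Substituting (\ref{Identify_funcapp}) into (\ref{Identify_sys}) and subtracting it from (\ref{Identify_obs}) (noting $\hat{x}_i-x_{s_i}=\tilde{x}_i$) gives the error dynamics
\begin{align*}
\dot{\tilde{x}}_i &= -a_i \tilde{x}_i + \tilde{W}_i^{k\top} S(x_s,u) - \varepsilon_{i_0}^k, \\
\dot{\tilde{W}}_i^k &= -\sigma_i \Gamma_i \tilde{W}_i^k - \Gamma_i \tilde{x}_i S(x_s,u) - \sigma_i \Gamma_i W_i^{k*},
\end{align*}
which is linear time-varying in $(\tilde{x}_i,\tilde{W}_i^k)$ and driven by the two small perturbations $\varepsilon_{i_0}^k$ (with $|\varepsilon_{i_0}^k|<\varepsilon_i^*$) and $\sigma_i\Gamma_i W_i^{k*}$.

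For parts (i) and (ii) I would take $V_i=\tfrac12\tilde{x}_i^2+\tfrac12\tilde{W}_i^{k\top}\Gamma_i^{-1}\tilde{W}_i^k$. A direct computation cancels the $\pm\tilde{x}_i\tilde{W}_i^{k\top}S$ cross terms and leaves $\dot{V}_i=-a_i\tilde{x}_i^2-\sigma_i\|\tilde{W}_i^k\|^2-\tilde{x}_i\varepsilon_{i_0}^k-\sigma_i\tilde{W}_i^{k\top}W_i^{k*}$; applying Young's inequality to the last two terms yields $\dot{V}_i\le -\rho_i V_i+c_i$ with $\rho_i>0$ and $c_i=O((\varepsilon_i^*)^2)+O(\sigma_i)$. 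Hence $\tilde{x}_i,\tilde{W}_i^k$ are bounded, and since $x_{s_i}$ is bounded by Assumption \ref{ASS_recurrent}, so are $\hat{x}_i$ and $\hat{W}_i^k$, and since $S$ is bounded all identifier signals are bounded, giving (i). The same inequality shows $|\tilde{x}_i|$ enters and remains in a residual set whose radius is an increasing function of $\varepsilon_i^*$ and $\sigma_i$, hence can be made small, which is (ii).

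The substantive part is (iii). By Assumption \ref{ASS_recurrent} (and since the modal coordinates $x_s$ are obtained from the distributed state, whose recurrence is inherited), the regressor argument $(x_s,u)$ is a bounded recurrent trajectory. Using the localization property of RBFs I would split $S=[S_\zeta^\top,S_{\bar\zeta}^\top]^\top$ and $\tilde{W}_i^k=[\tilde{W}_{i\zeta}^{k\top},\tilde{W}_{i\bar\zeta}^{k\top}]^\top$ into the neurons centered in a neighborhood of the trajectory and the rest, so that $S_{\bar\zeta}$ is exponentially small along $(x_s,u)$; by Lemma \ref{lem_DL}, $S_\zeta$ is persistently exciting along the trajectory. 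Dropping the negligible $S_{\bar\zeta}$ contributions, the $(\tilde{x}_i,\tilde{W}_{i\zeta}^k)$ subsystem has the same structure as the display above but with $S_\zeta$ PE; its unperturbed part is then uniformly exponentially stable by the standard result for adaptive systems with PE regressor, so the perturbed subsystem is input-to-state stable with respect to $\varepsilon_{i_0}^k$ and $\sigma_i\Gamma_{i\zeta}W_{i\zeta}^{k*}$, and therefore $\tilde{W}_{i\zeta}^k$ converges exponentially to a neighborhood of the origin of size $O(\varepsilon_i^*)+O(\sigma_i)$. Writing the identification error along the trajectory as $\eta_i^k(x_s,u)-\hat{W}_i^{k\top}S(x_s,u)=-\tilde{W}_{i\zeta}^{k\top}S_\zeta-\hat{W}_{i\bar\zeta}^{k\top}S_{\bar\zeta}+\varepsilon_{i\zeta}^k$ and bounding each term — the convergent $\tilde{W}_{i\zeta}^k$ against the bounded $S_\zeta$, the bounded $\hat{W}_{i\bar\zeta}^k$ against the exponentially small $S_{\bar\zeta}$, and $|\varepsilon_{i\zeta}^k|=O(\varepsilon_i^*)$ — establishes locally-accurate identification by $\hat{W}_i^{k\top}S$. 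Finally, because $\tilde{W}_{i\zeta}^k$ stays in that small neighborhood after the transient, its time average $\bar{W}_{i\zeta}^k$ over $[t_1,t_2]$ lies in a neighborhood of comparable size, and the same term-by-term bound gives $\bar{W}_i^{k\top}S(x_s,u)\approx\eta_i^k(x_s,u)$ along the trajectory, proving the constant-model claim.

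I expect the main obstacle to be this PE step: rigorously passing from the recurrence of $(x_s,u)$ plus Lemma \ref{lem_DL} to uniform exponential stability of the reduced nominal error system, and then propagating the $\sigma$-modification and NN-approximation perturbations through an ISS estimate while keeping the coupling to the exponentially small $S_{\bar\zeta}$-block under control. A secondary point that must be checked carefully is that the modal state $x_s$ indeed inherits recurrence from the recurrence of the PDE state, so that Lemma \ref{lem_DL} applies with argument $(x_s,u)$.
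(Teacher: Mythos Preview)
Your proposal is correct and follows exactly the standard deterministic-learning argument that the paper intends: the paper itself omits the proof entirely, stating only that it ``can be completed by following a similar line of the proof of [Th.~3.1]'' in the Wang--Hill DL monograph \cite{dlt09wChD}. The Lyapunov argument for (i)--(ii), the $\zeta/\bar\zeta$ localization split, the invocation of Lemma~\ref{lem_DL} for PE of $S_\zeta$, and the ISS-under-PE conclusion for (iii) are precisely the ingredients of that reference, so there is nothing to add.
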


Detailed proof can be completed by following a similar line of the proof of \cite[Th. 3.1]{dlt09wChD}, thus is omitted here.

\begin{remark} \label{rem_identi-xs}
Implementing (\ref{Identify_obs}) requires information of the system state $x_s(t)$, which can be obtained by measuring the state signal $x(z,t)$ from  the original PDE system (\ref{PDE_sys})--(\ref{PDE_sys-bound-initial}) via $x_s(t) = \left\langle \varphi_s(z), x(z,t) \right\rangle$.

\end{remark}

Through the above learning process, the knowledge of unknown function $\eta_{i}^k(x_s,u)$ of (\ref{Identify_sys}) can finally be obtained and stored in the constant RBF NN model $\bar{W}_i^{k\top}S(x_s,u)$, i.e., 
\bgeq \label{Identify_funcapp-bar}
	\eta_{i}^k(x_s,u)=\bar{W}_i^{k\top}S(x_s,u)+\varepsilon_{i}^k, 
\edeq
for all $i=1,2,\cdots,m$ and $k=0,1,\cdots,N$, where the approximation error $\varepsilon_{i}^k$ satisfies $|\varepsilon_{i}^k|=O(\varepsilon_i^{*})<\xi^{*}_i$, with $\xi_i^{*}$ being a positive constant that can be made arbitrarily small by constructing a sufficiently large number of neurons \cite{YuaW.SCL11}. 
%

%
%
%

\section{Fault Detection Scheme} \label{sec_FD}

With the results obtained from the above section, a novel FD scheme will be proposed in this section to achieve rapid FD of system (\ref{ODE_sys}). The associated analysis of FD performance will also be provided. 
 
\subsection{FD Estimator Design and Decision Making} 

With the constant RBF NN models $\bar{W}_i^{0\top}S(x_s,u)$ in (\ref{Identify_funcapp-bar}), a bank of FD estimators can be constructed  as follows: 
\bgeq \label{FD_estimator}
\begin{aligned}
	\dot{\bar{x}}_{i}^0 = -b_{i}^0(\bar{x}_{i}^0-x_{s_i})+ \lambda_i x_{s_i} +& \bar{W}_i^{0\top}S(x_s,u), \, \\
\end{aligned}
\edeq
where $i=1, \cdots,m$, $\bar{x}_i^0 $ is the estimator state with initial condition $\bar{x}_i^0(0)=x_{s_i}(0)$, $x_{s_i}$ is the $i$-th state of system (\ref{ODE_sys}), $b_{i}^0$ is a positive design constant, $\lambda_i$ is the $i$-th diagonal element of $A_s$ in (\ref{ODE_sys}), and $\bar{W}_i^{0\top}S(x_s,u)$ is used to approximate the function $f_{s_i}(x_s,u)$ in (\ref{ODE_sys}). 
Comparing the FD estimators (\ref{FD_estimator}) with the monitored system (\ref{ODE_sys}), and based on (\ref{Identify_funcapp-bar}), the following residual system (with residual $\tilde{x}_i^0 := \bar{x}_{i}^0-x_{s_i}$) can be derived: 
\bgeq \label{FD_error-sys}
\begin{aligned}
	\dot{\tilde{x}}_i^0 
	= &\,  -b_{i}^0 \tilde{x}_i^0 - \varepsilon_i^0 - \beta(t-t_0) \phi_{s_i}^{k}(x_s, u), 
\end{aligned}
\edeq
where $\varepsilon_i^0$ is the approximation error of model $\bar{W}_i^{0\top}S(x_s,u)$ for function $f_{s_i}(x_s, u)$ as defined in (\ref{Identify_funcapp-bar}), and $\phi_{s_i}^{k}(x_s, u)$ is the faulty dynamics occurring in system (\ref{ODE_sys}). 
The $\mathcal{L}_1$ norm of residual signal $\tilde{x}_{i}^0$ in (\ref{FD_error-sys}), i.e., $\left\| \tilde{x}_{i}^0(t) \right\|_1 = \frac{1}{T} \int_{t-T}^t \left| \tilde{x}_i^0 (\tau) \right| d \tau $ ($t>T$) with $T$ being a design parameter, will be used for real-time FD decision making. Before proceeding further, a threshold, denoted as $\bar{e}_i^0$, will be further designed to upper bound $\left\|\tilde{x}_{i}^0(t)\right\|_1$ when the monitored system (\ref{ODE_sys}) is operating in normal mode (i.e., for time $t<t_0$). 
To this end, consider the residual system (\ref{FD_error-sys}) for time $t < t_0$, note that $ \tilde{x}_{i}^0(0) =0$ and $ |\varepsilon_i^0 |<\xi_i^*$, for all $i=1,\cdots,m$, the system state $\tilde{x}_{i}^0$ satisfies: 
\bgeq \label{FD_error-dynamics}
\begin{aligned}
	 &\, \left|\tilde{x}_{i}^0(t) \right| = \left|\tilde{x}_{i}^0(0) e^{-b_i^0 t } - \int^{t}_{0} e^{-b_i^0 (t - \tau)} \varepsilon_i^0 d \tau \right| \\
	&\, \leq \int^{t}_{0} e^{-b_i^0 (t - \tau)} \left| \varepsilon_i^0 \right| d \tau 
	< \int^{t}_{0} e^{-b_i^0 (t - \tau)} \xi_i^* d \tau 
	<  \frac{\xi_i^*}{b_i^0}.  
\end{aligned}
\edeq 
It implies that the FD residual signal $\left\| \tilde{x}_{i}^0(t) \right\|_1 < \frac{\xi_i^*}{b_i^0} $ holds under the normal mode for all time $t < t_0$. 
Based on this, the FD threshold $\bar{e}_i^0 $ can be designed as:
\bgeq \label{FD_threshold}
	\bar{e}_i^0 := \frac{1}{b_i^0} ( \xi_i^* + \varrho_i) ,  \quad i=1,\cdots,m,
\edeq
where $\xi_i^*$ is a small constant given in (\ref{Identify_funcapp-bar}), $b_i^0$ is a design constant from (\ref{FD_estimator}), and $\varrho_i \geq 0$ is a small constant added as an auxiliary parameter for preventing possible FD misjudgment. 

With the FD estimators (\ref{FD_estimator}) and FD thresholds (\ref{FD_threshold}), the FD decision making is based on the following principle:
when no fault occurs in the monitored system (\ref{ODE_sys}), the residuals $\left\|\tilde{x}_{i}^0(t) \right\|_1 $ remain smaller than the corresponding thresholds $\bar{e}_i^0 $ for all $ i=1,\cdots,m$. 
If there exists a time instant $t_d$, such that, for some $i \in \{1,\cdots,m\}$, the FD residuals $\left\|\tilde{x}_{i}^0(t_d) \right\|_1 $ become larger than the corresponding thresholds $\bar{e}_i^0 $, i.e., $\left\|\tilde{x}_{i}^0(t_d) \right\|_1 > \bar{e}_i^0 $, it indicates that a certain fault must occur in the system (\ref{ODE_sys}). 
As a result, the occurrence of fault can be detected at time $t_d$. The idea is formalized as follows: 

\textbf{Fault detection decision making}: Compare the FD residual signals $\left\|\tilde{x}_i^0(t) \right\|_1$ with the FD thresholds $\bar{e}_i^0 $ for all $i=1,\cdots,m$. If there exists a finite time $t_d$, such that, for some $i \in \{1,\cdots,m\}$, $\left\| \tilde{x}_i^0(t_{d}) \right\|_1 > \bar{e}_i^0 $ holds. Then, the occurrence of a fault is deduced at time $t_d$. 

\begin{remark} \label{rem_epsilon*}

The parameter $\xi_i^*$ in (\ref{FD_threshold}) represents the upper bound of steady absolute approximation error $\left| \bar{W}_i^{k\top}S(x_s,u)- \eta_i^k(x_s,u) \right|$ of (\ref{Identify_funcapp-bar}) for all $ k=0,1,\cdots,m$. Direct derivation of this parameter is quite difficult since the function $ \eta_i^k(x_s,u) $ is not available. 
Alternatively, the value of $\xi_i^*$ could be evaluated in the following way: 
in the training phase of Section \ref{sec_identification}, with the obtained constant models $\bar{W}_i^{k\top}S(x_s,u)$, a bank of estimators in the form of (\ref{FD_estimator}) (with $ k = 0,1,\cdots,N $) can be developed by setting $b_i^k=1$. 
Then, following a similar line of the analysis in Eqs. (\ref{FD_estimator})--(\ref{FD_error-dynamics}), it can be proved that the associated state error $\tilde{x}_i^k = \bar{x}_i^k -x_{s_i}$ satisfies: $|\tilde{x}_i^k(t)| < \xi_i^* $. 
Thus, with such estimators, the value of $\xi_i^*$ can be obtained as the upper bound of steady absolute state error $|\tilde{x}_i^k|$, $\forall k = 0,1,\cdots,N $.  

\end{remark}

\begin{remark} \label{rem_varrho}
The parameter $\varrho_i$ in the FD threshold (\ref{FD_threshold}) is designed to improve robustness against system uncertainties.  
%
%
More specific, note that although our FD scheme is developed based on the approximate ODE system (\ref{ODE_sys}), the real-time FD process will be carried out on the original PDE system (\ref{ODE_sys-sf0}). 
The associated system dynamics of models (\ref{ODE_sys}) and (\ref{ODE_sys-sf0}) have a small difference due to the fast dynamics of state $\chi_f$. 
%
The parameter $\varrho_i$ is thus introduced to compensate such a difference and mitigate its potential effects on the FD performance. 

\end{remark}

\begin{remark} \label{rem_FDthreshold}
The FD threshold (\ref{FD_threshold}) can be made very small, because the parameter $\xi_i^*$ can be made arbitrarily small by constructing a sufficiently large number of neurons in the training process of Section \ref{sec_identification}, and the parameter $\varrho_i$ can be selected also as a very small number. 

\end{remark}

\begin{remark} \label{rem_FDestimator}

Most of existing FD schemes (e.g., \cite{CaiFS.AUT16, DeyPM.AUT19, ArmD.AIChE08, DemA.JRNC12}) cannot deal with the effect of system uncertainty on FD process. As a result, these schemes require the occurring faults to be of sufficiently large magnitudes (larger than that of system uncertainty) for successful detection, quite limiting their fault detectability. 
However, this issue can be addressed under our scheme. Specifically, as established in Section \ref{sec_identification}, the system uncertainty $f_{s_i}(x_s,u)$ in (\ref{ODE_sys}) can be accurately identified with the DL-based dynamics learning scheme and the associated knowledge can be obtained and stored in a constant NN model $\bar{W}_i^{0\top}S(x_s,u)$ of (\ref{Identify_funcapp-bar}). By using this model to design the FD estimator (\ref{FD_estimator}), the fault dynamics $\phi_{s_i}^k(x_s,u)$ in system (\ref{ODE_sys}) can be accurately distinguished from the uncertain dynamics $f_{s_i}(x_s,u)$, and will be captured by the FD signal $\tilde{x}_i^0$ in (\ref{FD_error-sys}) for accurate detection. It will facilitate our scheme to develop improved fault detectability compared to the ones in \cite{CaiFS.AUT16, DeyPM.AUT19, ArmD.AIChE08, DemA.JRNC12}. Associated rigorous analysis will be conducted in the next section.

\end{remark}

\begin{remark} \label{rem_FDfast}

Rapid FD process can be achieved with our approach. Note that the FD estimators of (\ref{FD_estimator}) are designed with the constant NN models $\bar{W}_i^{0\top}S(x_s,u)$, whose implementation does not involve any parameter adaptation. This will largely shorten the FD time, such that FD process of (\ref{FD_estimator}) can be achieved in a rapid manner. 

\end{remark}

\subsection{Detectability Condition}

To analyze the performance of the proposed FD scheme, in the following, we will study the fault detectability condition, i.e., under what conditions the occurring fault in system (\ref{ODE_sys}) is detectable with our proposed FD scheme. 

\begin{theorem} \label{theo_FD}
Consider the system (\ref{ODE_sys}) and the fault detection system consisting of estimators (\ref{FD_estimator}) and thresholds (\ref{FD_threshold}). If there exists a time interval $ I = [t_{a}, t_{b} ] \subseteq [t_b-T,t_b]$ with $t_a \geq t_0$, such that for some $i \in \{1,\cdots,m\}$, 
\bgeq \label{FD_detcond1}
	\left| \phi_{s_i}^k (x_s (t), u (t)) \right| > 2\xi_i^* + 2 \varrho_i, \quad \forall t \in I
\edeq
and 
\bgeq \label{FD_detcond2}
	l: = t_{b}-t_{a} \geq \frac{1}{b_i^0} \textup{ln} \frac{7 \mu_i - 6\xi_i^*}{\mu_i - 2\xi_i^*} + \frac{T( 4 \xi_i^* + 4 \varrho_i) }{ 3 \mu_i - 2\xi_i^*},
\edeq
where $\mu_i := \min \{\left| \phi_{s_i}^k (x_s,u) \right|, \forall t \in I \}  $, 
then, $ \left\|\tilde{x}_i^0(t_b) \right\|_1 > \bar{e}_i^0 $ holds and the occurrence of a fault will be detected at time $t_b$, i.e., $t_d=t_b$. 
\end{theorem}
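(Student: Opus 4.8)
\emph{Overall approach and reduction.} I would lower-bound the averaged residual $\|\tilde{x}_i^0(t_b)\|_1$ directly from the residual ODE (\ref{FD_error-sys}) restricted to the window $I$ on which the fault is guaranteed to be large, and then compare it with the threshold $\bar{e}_i^0=(\xi_i^*+\varrho_i)/b_i^0$. Since $I=[t_a,t_b]\subseteq[t_b-T,t_b]$ and the integrand is nonnegative, $\|\tilde{x}_i^0(t_b)\|_1=\frac{1}{T}\int_{t_b-T}^{t_b}|\tilde{x}_i^0(\tau)|\,d\tau\ge\frac{1}{T}\int_{t_a}^{t_b}|\tilde{x}_i^0(\tau)|\,d\tau$, so it suffices to prove $\int_{t_a}^{t_b}|\tilde{x}_i^0(\tau)|\,d\tau>T(\xi_i^*+\varrho_i)/b_i^0$. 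Because (\ref{FD_detcond1}) forces $|\phi_{s_i}^k(x_s(t),u(t))|>2\xi_i^*+2\varrho_i>0$ throughout $I$ while $t\mapsto\phi_{s_i}^k(x_s(t),u(t))$ is continuous, $\phi_{s_i}^k$ has a constant sign on $I$; assume without loss of generality $\phi_{s_i}^k(x_s(t),u(t))\ge\mu_i>0$ on $I$, the opposite case being symmetric under $\tilde{x}_i^0\mapsto-\tilde{x}_i^0$.

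\emph{Forcing the residual negative on $I$.} On $[t_a,t_b]$ one has $t\ge t_0$, so (\ref{FD_error-sys}) reads $\dot{\tilde{x}}_i^0=-b_i^0\tilde{x}_i^0-\varepsilon_i^0-\phi_{s_i}^k$. Variation of constants together with $|\varepsilon_i^0|<\xi_i^*$ and $\phi_{s_i}^k\ge\mu_i$ yields, for $t\in[t_a,t_b]$,
\[
\tilde{x}_i^0(t)\ \le\ \tilde{x}_i^0(t_a)\,e^{-b_i^0(t-t_a)}-\frac{\mu_i-\xi_i^*}{b_i^0}\bigl(1-e^{-b_i^0(t-t_a)}\bigr),
\]
which, because $\mu_i>2\xi_i^*>\xi_i^*$, is eventually strictly negative and decreasing in $t$, so $|\tilde{x}_i^0|$ grows toward roughly $(\mu_i-\xi_i^*)/b_i^0$. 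The delicate step is an upper bound on the initial value $\tilde{x}_i^0(t_a)$: using the normal-mode estimate $|\tilde{x}_i^0(t)|<\xi_i^*/b_i^0$ valid for $t<t_0$ together with a worst-case estimate of how far the fault can push the residual over $[t_0,t_a]$, one arrives at a bound of order $\mu_i/b_i^0$; with $\tilde{x}_i^0(t_a)\le(3\mu_i-2\xi_i^*)/(4b_i^0)$ the displayed inequality then gives, for every $t$ in $I$ with $t\ge\tau_1:=t_a+\frac{1}{b_i^0}\ln\frac{7\mu_i-6\xi_i^*}{\mu_i-2\xi_i^*}$, the bound $\tilde{x}_i^0(t)\le-(3\mu_i-2\xi_i^*)/(4b_i^0)$, i.e. $|\tilde{x}_i^0(t)|\ge L:=(3\mu_i-2\xi_i^*)/(4b_i^0)$. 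The constants $7\mu_i-6\xi_i^*$ and $3\mu_i-2\xi_i^*$ in (\ref{FD_detcond2}) are precisely those this estimate produces.

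\emph{Integration and conclusion.} Condition (\ref{FD_detcond2}) gives $t_b-\tau_1=l-\frac{1}{b_i^0}\ln\frac{7\mu_i-6\xi_i^*}{\mu_i-2\xi_i^*}\ge\frac{4T(\xi_i^*+\varrho_i)}{3\mu_i-2\xi_i^*}$, hence
\[
\int_{t_a}^{t_b}|\tilde{x}_i^0(\tau)|\,d\tau\ \ge\ \int_{\tau_1}^{t_b}|\tilde{x}_i^0(\tau)|\,d\tau\ \ge\ L\,(t_b-\tau_1)\ \ge\ \frac{3\mu_i-2\xi_i^*}{4b_i^0}\cdot\frac{4T(\xi_i^*+\varrho_i)}{3\mu_i-2\xi_i^*}\ =\ \frac{T(\xi_i^*+\varrho_i)}{b_i^0}\ =\ T\bar{e}_i^0 .
\]
Carrying the strict inequalities coming from (\ref{FD_detcond1}) through the chain makes it strict, so $\|\tilde{x}_i^0(t_b)\|_1>\bar{e}_i^0$; by the fault-detection decision rule an alarm is then raised no later than $t_b$, i.e. $t_d=t_b$.

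\emph{Main obstacle.} The crux is the upper bound on $\tilde{x}_i^0(t_a)$. Since $t_a$ may lie strictly after the fault onset $t_0$, the clean normal-mode estimate $\xi_i^*/b_i^0$ is no longer available at $t_a$, and over $[t_0,t_a]$ the fault may have pushed the residual opposite to the direction in which it is pushed on $I$; pinning down the worst admissible value of $\tilde{x}_i^0(t_a)$ is what determines the transient time $\tau_1-t_a$, the sustained level $L$, and hence the precise form of the detectability bound (\ref{FD_detcond2}). Everything else — variation of constants, the monotonicity and exponential-decay estimates, and the final integration — is routine.
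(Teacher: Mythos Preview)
Your overall structure matches the paper's proof closely, but the step you yourself flag as the main obstacle is genuinely unresolved. You assert $\tilde{x}_i^0(t_a)\le(3\mu_i-2\xi_i^*)/(4b_i^0)$ by appealing to ``a worst-case estimate of how far the fault can push the residual over $[t_0,t_a]$.'' This cannot work: the hypotheses place no bound whatsoever on $\phi_{s_i}^k$ outside the interval $I$, so on $[t_0,t_a]$ the fault may be arbitrarily large and of the sign opposite to its sign on $I$, driving $\tilde{x}_i^0(t_a)$ arbitrarily far in the positive direction (in your sign convention). No a priori bound on $\tilde{x}_i^0(t_a)$ of the form you need is available from the stated assumptions.

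The paper sidesteps this entirely. Instead of bounding the residual at the fixed endpoint $t_a$, it introduces the sublevel set $I':=\{t\in I:|\tilde{x}_i^0(t)|\le L\}$ with $L=(3\mu_i-2\xi_i^*)/(4b_i^0)$. At the left endpoint $t_a'=\min I'$ one has $|\tilde{x}_i^0(t_a')|\le L$ \emph{by definition}, so the very initial bound you are missing is built in for free; your variation-of-constants estimate, run from $t_a'$ rather than from $t_a$, then gives $|\tilde{x}_i^0(t)|>L$ once $t-t_a'>\frac{1}{b_i^0}\ln\frac{7\mu_i-6\xi_i^*}{\mu_i-2\xi_i^*}$, which bounds the length of $I'$. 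A brief monotonicity check (once $\tilde{x}_i^0$ drops below $-L$ it cannot return, since the forcing on $I$ is $\le-(\mu_i-\xi_i^*)$ and the corresponding equilibrium lies below $-L$) shows $I'$ is a single subinterval. On $I\setminus I'$ one has $|\tilde{x}_i^0|>L$, and integrating over that set yields the conclusion exactly as in your final display. The fix, in short, is to replace ``bound the initial value at $t_a$'' with ``bound the duration of the set on which the residual is small,'' which turns an unavailable initial-value estimate into a tautological one.
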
 

\begin{proof}
Consider the residual signal $\tilde{x}_{i}^0(t) $ of (\ref{FD_error-sys}). 
In the time interval $I$, we assume that there exists a subinterval $I' \subseteq I$ such that the signal $\left| \tilde{x}_{i}^0(t) \right|$ has a very small magnitude, i.e.,
\bgeq \label{FD_dettheo1}
	I': = \left\lbrace t \in I: \, \left| \tilde{x}_{i}^0(t) \right| \leq \frac{3 \mu_i - 2 \xi_i^*}{4 b_i^0} \right\rbrace, 
\edeq
where $\mu_i > 2 \xi_i^*+2 \varrho_i  $ from (\ref{FD_detcond1}). 
For $t \in I'$, by denoting $t_{a}'= \min\{t, t \in I'\}$, the residual signal  $\tilde{x}_{i}^0 $ of (\ref{FD_error-sys}) satisfies  
\bgeq \label{FD_dettheo2}
\begin{aligned}
	\left| \tilde{x}_{i}^0 \right| & = \left| \tilde{x}_{i}^0(t_{a}') e^{-b_i^0 (t-t_{a}') } - \int^{t}_{t_{a}'} e^{-b_i^0 (t - \tau)} ( \phi_{s_i}^{k}(x_s, u) + \varepsilon_i^0 ) d \tau \right| \\
	\geq & \left| \int^{t}_{t_{a}'} e^{-b_i^0 (t - \tau)} ( \phi_{s_i}^{k}(x_s, u) + \varepsilon_i^0 ) d \tau \right| - \left| \tilde{x}_{i}^0(t_{a}') \right| e^{-b_i^0 (t-t_{a}') } . \\
\end{aligned}
\edeq 
From (\ref{FD_detcond1}) and (\ref{Identify_funcapp-bar}), for all $t \in I'$, $ \phi_{s_i}^{k}(x_s, u) + \varepsilon_i^0$ satisfies
\bgeq \label{FD_dettheo3}
\begin{aligned}
	\left| \phi_{s_i}^{k}(x_s, u) + \varepsilon_i^0 \right| &\,\geq \left| \phi_{s_i}^{k}(x_s, u) \right| - \left| \varepsilon_i^0 \right| \geq \mu_i -\xi_i^*.  
\end{aligned}
\edeq
Note that $\mu_i-\xi_i^*>0$, it is easily seen that $ \phi_{s_i}^{k}(x_s, u) + \varepsilon_i^0$ has an unchanged sign for all $t \in I'$, such that 
\bgeq \label{FD_dettheo30}
\begin{aligned}
	&\,\left| \int^{t}_{t_{a}'} e^{-b_i^0 (t - \tau)} ( \phi_{s_i}^{k}(x_s, u) + \varepsilon_i^0 ) d \tau \right| \\ 
	= &\, \int^{t}_{t_{a}'} e^{-b_i^0 (t - \tau)} \left| \phi_{s_i}^{k}(x_s, u) + \varepsilon_i^0 \right| d \tau \\
	 \geq &\, \int^{t}_{t_{a}'} e^{-b_i^0 (t - \tau)} (\mu_i -\xi_i^*) d \tau 
	 = \frac{\mu_i - \xi_i^*}{b_i^0} (1 - e^{-b_i^0(t-t_{a}')} ). 
\end{aligned}
\edeq 
Then, since $\left| \tilde{x}_{i}^0(t_{a}') \right| \leq \frac{3 \mu_i - 2 \xi_i^*}{4 b_i^0}$, inequality (\ref{FD_dettheo2}) reduces to 
\bgeq \label{FD_dettheo4}
\begin{aligned}
	\left| \tilde{x}_{i}^0(t) \right| 
	\geq &\, \frac{\mu_i - \xi_i^*}{b_i^0} (1 - e^{-b_i^0(t-t_{a}')} ) - \frac{3 \mu_i - 2 \xi_i^*}{4 b_i^0} e^{-b_i^0( t - t_{a}')} . 
\end{aligned}
\edeq
As a result, it can be deduced that $\left| \tilde{x}_{i}^0(t) \right| > \frac{3 \mu_i - 2 \xi_i^*}{4 b_i^0}$ holds for $t - t_{a}' > \frac{1}{b_i^0} \textup{ln} \frac{7 \mu_i - 6\xi_i^*}{\mu_i - 2\xi_i^*}$, and the length of time interval $I'$ in (\ref{FD_dettheo1}), denoted by $l'$, satisfies $ l' < \frac{1}{b_i^0} \textup{ln} \frac{7 \mu_i - 6\xi_i^*}{\mu_i - 2\xi_i^*}$. 
Furthermore, it is easily verified that there exists at most one subinterval $I'$ defined in (\ref{FD_dettheo1}) over the time interval $I$. This implies that for the time interval $I-I'$, we have:
\bgeq \label{FD_dettheo6}
	\left| \tilde{x}_i^0 (t) \right| > \frac{3 \mu_i - 2 \xi_i^*}{4 b_i^0}, \quad \forall t \in I-I',
\edeq
and the length of time interval $I-I'$ (i.e., $l-l'$) satisfies $l - l' > l- \frac{1}{b_i^0} \textup{ln} \frac{7 \mu_i - 6\xi_i^*}{\mu_i - 2\xi_i^*}$. 
Based on this, from (\ref{FD_detcond2}), we have: 
\bgeq \label{FD_dettheo7}
\begin{aligned}
	\left\| \tilde{x}_{i}^0(t_b) \right\|_1 &\, = \frac{1}{T} \int_{t_b-T}^{t_b} \left| \tilde{x}_{i}^0(\tau) \right| d \tau  
	\geq \frac{1}{T} \int_{I-I'} \left| \tilde{x}_{i}^0(\tau) \right| d \tau \\
	> &\, \frac{1}{T} \int_{I-I'} \frac{3 \mu_i - 2 \xi_i^*}{4 b_i^0} d \tau 
	= \frac{1}{T} ( l - l' ) \frac{3 \mu_i - 2 \xi_i^*}{4 b_i^0} \\
	> &\, \frac{1}{T} ( l - \frac{1}{b_i^0} \textup{ln} \frac{7 \mu_i - 6\xi_i^*}{\mu_i - 2\xi_i^*} ) \frac{3 \mu_i - 2 \xi_i^*}{4 b_i^0}  
	\geq \frac{ \xi_i^*+\varrho_i }{ b_i^0} .
\end{aligned}
\edeq
Thus, $\left\| \tilde{x}_{i}^0(t_b) \right\|_1 > \bar{e}_i^0$ holds and the occurrence of fault in system (\ref{ODE_sys}) can be detected at time $t_b$. This ends the proof. 
\end{proof}

%
%

\begin{remark} \label{rem_FD}
The detectability conditions (\ref{FD_detcond1})--(\ref{FD_detcond2}) show that 
if there exists a time interval $[t_a,t_b]$ of (\ref{FD_detcond2}) such that the occurring fault $\phi_{s_i}^k(x_s,u)$ has a sufficiently large magnitude, i.e., larger than the lower bound $2\xi_i^*+2\varrho_i$ of (\ref{FD_detcond1}), then, fault detection can be achieved. 
Particularly, note that the lower bound $2\xi_i^*+2\varrho_i$ can be made arbitrarily small, as argued in Remark \ref{rem_FDthreshold}, the conditions (\ref{FD_detcond1})--(\ref{FD_detcond2}) are thus satisfiable even for those faults with relatively small magnitudes. 
%

\end{remark}

\section{Fault Isolation Scheme} \label{sec_FI}

Once the occurring fault is detected at time $t_d$ ($t_d>t_0$), the FI scheme will be activated to identify the type of the occurring fault. This section will present the design of such an FI scheme, as well as the associated analysis of FI performance. 
To ease the presentation, we assume without loss of generality that an unknown fault $l'$ that is similar to (but not necessarily perfectly match) the trained fault $l$ ($l \in \{1,\cdots,N\}$) is occurring in system (\ref{ODE_sys}), i.e., 
\bgeq \label{FI_ODE-sys}
\begin{aligned}
	\dot{x}_{s} = A_s x_{s} + f_{s}(x_s, u)+\phi^{l'}_{s}(x_s, u).  
\end{aligned}
\edeq 

\subsection{FI Estimator Design and Decision Making} \label{sec_FIestimator}

%
With the constant models $\bar{W}_i^{k\top}S(x_s,u)$ of (\ref{Identify_funcapp-bar}) obtained from the identification phase of Section \ref{sec_identification}, we propose to construct a bank of FI estimators in the following form
\bgeq \label{FI_estimator}
\begin{aligned}
	\dot{\bar{x}}_{i}^k = -b_{i}(\bar{x}_{i}^k-x_{s_i}) &\, + \lambda_i x_{s_i}+ \bar{W}_i^{k\top}S(x_s,u),  \\
\end{aligned}
\edeq 
where $i=1,\cdots,m,\, k=1,\cdots,N$, $\bar{x}_i^k $ is the estimator state with initial condition $\bar{x}_i^k(t_d) = x_{s_i}(t_d)$, $x_{s_i}$ is the $i$-th state of system (\ref{FI_ODE-sys}), $\lambda_i$ is the $i$-th diagonal element of matrix $A_s$ in (\ref{FI_ODE-sys}), $b_{i}$ is a positive design constant, and $\bar{W}_i^{k\top}S(x_s,u)$ approximates the function $\eta_i^k(x_s,u)= f_{s_i}(x_s, u) + \phi_{s_i}^{k}(x_s, u)$ of system (\ref{Identify_sys0}). 
Comparing the FI estimators (\ref{FI_estimator})  with the monitored system (\ref{FI_ODE-sys}), and based on (\ref{Identify_funcapp-bar}), the residual systems (with residual $\tilde{x}_i^k := \bar{x}_{i}^k-x_{s_i}$) can be derived as follows: 
\bgeq \label{FI_error-sys}
\begin{aligned}
	\dot{\tilde{x}}_i^k 
	=& \, -b_{i} \tilde{x}_i^k - \varepsilon_{i}^k + \phi_{s_i}^{k}(x_s, u)-\phi_{s_i}^{l'}(x_s, u),
\end{aligned}
\edeq
where $\varepsilon_{i}^k$ is the approximation error of model $\bar{W}_i^{k\top}S(x_s,u)$ for function $\eta_i^k(x_s,u)$ as defined in (\ref{Identify_funcapp-bar}), $ \phi_{s_i}^{k}(x_s, u) $ is the $k$-th faulty dynamics that has been learned/trained in Section \ref{sec_identification}, and $\phi_{s_i}^{l'}(x_s, u)$ is the faulty dynamics occurring in system (\ref{FI_ODE-sys}). 
For the purpose of analysis, we introduce a so-called fault mismatch function $ \rho_i^{k,l'}(x_s,u):= \phi_{s_i}^{k}(x_s,u)-\phi_{s_i}^{l'}(x_s,u) $ to represent the dynamics difference between the trained fault $k$ and occurring fault $l'$. 
Then, the residual system (\ref{FI_error-sys}) can be rewritten as:
\bgeq \label{FI_residual}
\begin{aligned}
	\dot{\tilde{x}}_i^k =& \, -b_{i} \tilde{x}_i^k - \varepsilon_{i}^k + \rho_i^{k,l'}(x_s, u). 
\end{aligned}
\edeq 
Similar to the FD case, the $\mathcal{L}_1$ norm of residual signal $\tilde{x}_{i}^k$ in (\ref{FI_residual}), i.e., $\left\|\tilde{x}_{i}^k(t)\right\|_1 = \frac{1}{T} \int_{t-T}^t |\tilde{x}_i^k (\tau)| d \tau $ with $T$ being a design parameter, will be utilized for real-time FI decision making.
%

%
%
%
%

%
In the following, for FI decision making, an adaptive threshold, denoted as $\bar{e}_i^k(t)$, will be further designed to upper bound the residual signal $\left\|\tilde{x}_{i}^k(t)\right\|_1$ when the occurring fault $l'$ in (\ref{FI_ODE-sys}) is similar to the trained fault $l$. 
To this end, the following assumption on the dynamics difference between the occurring fault $l'$ and similar fault $l$ is made.  
\begin{assumption} \label{ASS_fault-bound}
The dynamics difference between any pair of the occurring fault $l'$ and its similar  fault $l$ $(l \in \{1,\cdots,N\})$, denoted by $\rho_i^{l,l'}(x_s, u)$, is bounded by a known function $\bar{\rho}_i^l(x_s, u)$, i.e., $\left| \rho_i^{l,l'}(x_s,u) \right| \leq \bar{\rho}_i^l(x_s,u)$ for all $i=1,\cdots,m$. 
\end{assumption}

\begin{remark} \label{rem_comp-rho}

Assumption \ref{ASS_fault-bound} indicates that the occurring fault $l'$ is allowed to have a certain degree of difference from its similar fault $l$, and such difference can be quantified by the function $\bar{\rho}_i^l(x_s,u)$. 
In other words, it allows that the occurring fault is not necessarily required to exactly match any of the pre-defined/pre-trained faults, which however is typically required by existing methods of \cite{CheW.IJACSP14, CheWH.TNNLS13, ZhaY.ACCESS19}. 
%
%
This property renders our FI scheme a better robust capability of preventing false/missed FI alarm in the presence of slight fault difference during the FI process.  

\end{remark}

Based on the above setup, to design  the FI adaptive threshold, we consider the $l$-th residual system in (\ref{FI_residual}), its time-domain solution can be derived as:
\bgeq \label{FI_match-residual0}
\begin{aligned}
	\tilde{x}_i^l (t) = \tilde{x}_i^l(t_d) e^{-b_i (t-t_d)} + \int_{t_d}^t e^{-b_i(t-\tau)} ( \rho_i^{l,l'}(x_s, u) - \varepsilon_{i}^l ) d \tau .
\end{aligned}
\edeq 
Note that $ \tilde{x}_{i}^l(t_d) =0$ and $\left| \varepsilon_{i}^l \right| < \xi_i^* $ from (\ref{Identify_funcapp-bar}), under Assumption \ref{ASS_fault-bound}, we have: 
\bgeq \label{FI_match-residual}
\begin{aligned}
	\left| \tilde{x}_i^l (t) \right| 
	\leq &\, \int_{t_d}^t e^{-b_i(t-\tau)} ( \left| \varepsilon_{i}^l  \right| + \left| \rho_i^{l,l'}(x_s, u) \right| ) d \tau \\
	< &\, \frac{\xi_i^*}{b_i} + \int_{t_d}^t e^{-b_i(t-\tau)} \bar{\rho}_i^{l}(x_s, u) d \tau.
\end{aligned}
\edeq 
It guarantees that the FI residual signal $\left\| \tilde{x}_{i}^l(t) \right\|_1$ satisfies: 
\bgeq \label{FI_error-dynamics1}
	\left\| \tilde{x}_{i}^l(t) \right\|_1 < \frac{\xi_i^*}{b_i} + \left\| \int_{t_d}^t e^{-b_i(t-\tau)} \bar{\rho}_i^l (x_s, u) d \tau \right\|_1.
\edeq
Thus, the FI adaptive threshold $\bar{e}_i^l(t)$ can be designed as:
\bgeq \label{FI_threshold}
	\bar{e}_i^l(t) := \frac{\xi_i^*}{b_i} + \left\| \int_{t_d}^t e^{-b_i(t-\tau)} \bar{\rho}_i^{l}(x_s, u) d \tau \right\|_1,
\edeq 
for all $i=1,\cdots,m$, where $\xi_i^*$ is a small constant given in (\ref{Identify_funcapp-bar}), $b_i$ is a design constant from (\ref{FI_estimator}), and $\bar{\rho}_i^{l}(x_s, u)$ is a known function defined in Assumption \ref{ASS_fault-bound}. 

\begin{remark} \label{rem_FIthreshold-simplify}

The FI thresholds (\ref{FI_threshold}) can be implemented in a simplified form if the function $\bar{\rho}_i^l(x_s,u)$ is a constant $\bar{\rho}_i^l$. 
%
Specifically, with $\bar{\rho}_i^l(x_s,u)=\bar{\rho}_i^l$, the threshold (\ref{FI_threshold}) is given by $\bar{e}_i^l(t) = \frac{\xi_i^*}{b_i} + \left\| \frac{\bar{\rho}_i^l }{b_i} (1-e^{-b_i (t-t_d)}) \right\|_1 $, which can be further simplified as a constant threshold $\bar{e}_i^l = \frac{1}{b_i}(\xi_i^*+ \bar{\rho}_i^l) $. 

\end{remark}

Consequently, for the monitored system (\ref{FI_ODE-sys}), the proposed FI scheme consists of the FI estimators (\ref{FI_estimator}) and the adaptive threshold (\ref{FI_threshold}). 
Real-time FI decision making is based on the following principle.  
If there exists a unique residual system in (\ref{FI_residual}), say the $l$-th one, such that for all $ i=1,\cdots,m$ the residual signals $\tilde{x}_i^l$ satisfy $\left\| \tilde{x}_i^l(t) \right\|_1 \leq \bar{e}_i^l (t) $ for all time $t>t_d$, then, it can be deduced that the occurring fault $l'$ in (\ref{FI_ODE-sys}) is similar to the trained fault $l$. 
Using this idea, the FI decision making scheme can be devised as follows.

\textbf{Fault isolation decision making}: Compare the FI residual signals $\left\|\tilde{x}_i^k (t) \right\|_1$ with the FI adaptive thresholds $\bar{e}_i^k (t)$ for time $t \geq t_d$ and all $ i=1,\cdots,m$, $k=1,\cdots,N$. 
If there exists a unique $l \in \{1,\cdots,N\}$ such that: (i) $\forall i=1,\cdots,m$, $\left\|\tilde{x}_i^l(t)\right\|_1 \leq \bar{e}_i^l(t)$ holds for all time $t \geq t_d$; and (ii) $\forall k \in \{1,\cdots,N\} / \{l\}$, $\exists i\in \{1,\cdots,m\}$, $\left\|\tilde{x}_i^k(t^k)\right\|_1 > \bar{e}_i^k(t^k)$ holds at some time instant $t^k > t_d$. Then, the occurring fault $l'$ can be identified similar to the fault $l$, and the isolation time can be obtained as: $ t_{iso} = \max \left\{ t^k, k \in \{1,\cdots,N\} / \{l\} \right\}$.

\begin{remark} \label{rem_FIestimator}


Similar to the FD scheme of Section \ref{sec_FD}, the proposed FI scheme can effectively deal with the effect of system uncertainty $f_s(x_s,u)$ in (\ref{FI_ODE-sys}) for accurate isolation, and the associated FI process can be achieved in a rapid manner. 
%
%
This is owing to the utilization of constant NN models $\bar{W}_i^{k\top}S(x_s,u)$ (obtained through the training process of Section \ref{sec_identification}) in the design of FI estimators (\ref{FI_estimator}). 
%

\end{remark}

\begin{remark} \label{rem_FIthreshold}

Existing FI schemes in \cite{BanK.ACC12, CaiJ.ACC16, GhaE.AUT09} rely on  ``constant'' thresholds for FI, which would limit the ability to separate the temporal dynamics of different type of faults for accurate isolation. These schemes can be applicable only to the faults that have sufficiently distinct differences. 
%
%
For example, the FI schemes in \cite{BanK.ACC12, GhaE.AUT09} can distinguish the actuator faults occurring at different locations, but cannot recognize the actuator faults that occur at the same location but have different magnitudes. 
Advanced over these schemes, our approach design an ``adaptive'' threshold of (\ref{FI_threshold}) by using the nonlinear function $\bar{\rho}_i^{l}(x_s,u)$ that can accurately specify the similarity of each $l$-th type of faults, as defined in Assumption \ref{ASS_fault-bound}. 
With such a threshold, our FI scheme can achieve accurate isolation even for the faults that have relatively small differences. Improved FI accuracy and fault isolatability with our scheme compared to the ones in \cite{BanK.ACC12, CaiJ.ACC16, GhaE.AUT09} will be demonstrated later. 
%

\end{remark}

\begin{remark} \label{rem_FIapplicable}

Our FI scheme provides a unified framework to address the isolation problem of general faults occurring in the PDE system (\ref{PDE_sys})--(\ref{PDE_sys-bound-initial}), which possesses enhanced  applicability compared to most existing FI schemes, e.g., \cite{ElFG.AIChE07, BanK.ACC12, CaiJ.ACC16, GhaE.AUT09, GhaE.CDC07}. 
Specifically, note that the schemes of \cite{ElFG.AIChE07, BanK.ACC12, CaiJ.ACC16, GhaE.AUT09, GhaE.CDC07} are applicable only to some special cases. For example, the FI scheme in \cite{ElFG.AIChE07} is tailored only to systems with precisely known model; the method of \cite{CaiJ.ACC16} is tailored to linear PDE systems; while the approaches in \cite{BanK.ACC12, GhaE.AUT09, GhaE.CDC07} are applicable only for actuator faults. 
Advanced over these approaches, our FI scheme is developed for a class of general PDE system with the form of (\ref{PDE_sys})--(\ref{PDE_sys-bound-initial}), in which the occurring fault $\phi^k(x,u)$ is not required to be of any special type and the system model is allowed to have uncertain nonlinear component $f(x,u)$. 

%
\end{remark}

\subsection{Isolatability Condition}

To analyze the performance of the proposed FI scheme, the fault isolatability condition will be studied, i.e., under what conditions the occurring fault $l'$ in system (\ref{FI_ODE-sys}) can be identified similar to a unique trained fault $l$. 

\begin{theorem} \label{theo_FI}
Consider the monitored system (\ref{FI_ODE-sys}) and the fault isolation system consisting of estimators (\ref{FI_estimator}) and adaptive thresholds (\ref{FI_threshold}). For each $ k \in \{1,\cdots,N\}/\{l\}$ and some $ i\in \{1,\cdots,m\}$, if there exists a time interval $I^k=[t_a^k,t_b^k] \subseteq [t_b^k-T,t_b^k] $ with $t_a^k \geq t_d$, such that 
\bgeq \label{FI_isocond1}
	\left| \rho_{i}^{k,l'} (x_s,u) \right| > \bar{\rho}_i^{k}(x_s, u) + 2 \xi^*_i , \quad  \forall t \in I^k, 
\edeq  
and 
\bgeq \label{FI_isocond2}
\begin{aligned}
	l^k := t_a^k - t_b^k \geq &\, \frac{  2 \xi_i^* + 2 \bar{\rho}^k_{i_{\max}}}{\mu_i + 2\bar{\rho}^k_{i_{\max}}} ( T + \frac{1}{b_i} \textup{ln} \frac{ \mu_i + 2\bar{\rho}^k_{i_{\max}} + \xi_i^*}{\mu_i - 2\xi_i^*} ) \\
	&\, + \frac{\mu_i - 2 \xi_i^*}{ \mu_i + 2\bar{\rho}^k_{i_{\max}} } ( \frac{1}{b_i} \textup{ln} \frac{ 3\mu_i + 4\bar{\rho}^k_{i_{\max}} }{\mu_i - 2\xi_i^*} ),
\end{aligned}
\edeq
where $ \mu_i := \min \{ \left| \rho_{i}^{k,l'} (x_s,u) \right| - \bar{\rho}_i^{k}(x_s, u), \forall t \in I^k \}$, and $\bar{\rho}^k_{i_{\max}}: = \max \{\bar{\rho}_i^{k}(x_s, u), \forall t \geq t_d \} $,  
then, $ \left\|\tilde{x}_i^k(t_b^k) \right\|_1 > \bar{e}_i^k(t_b^k)$ holds, the occurring fault $l'$ will be identified similar to fault $l$, and the isolation time is obtained as $t_{iso}=\max \left\{t_b^k, \forall k \in \{1,\cdots,N\}/\{l\} \right\} $.
\end{theorem}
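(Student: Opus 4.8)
The plan is to mirror the detectability argument of Theorem~\ref{theo_FD}, carried out for each fixed wrong index $k \in \{1,\dots,N\}/\{l\}$ together with the associated state index $i$, working with the residual system (\ref{FI_residual}) whose forcing term is $\rho_i^{k,l'}(x_s,u) - \varepsilon_i^k$. First I would use the mismatch condition (\ref{FI_isocond1}) to show that, throughout $I^k$, this forcing term is sign-definite and bounded away from zero: since $|\varepsilon_i^k| < \xi_i^*$ and, by the definition of $\mu_i$, $|\rho_i^{k,l'}(x_s,u)| \ge \mu_i + \bar\rho_i^k(x_s,u) \ge \mu_i$ with $\mu_i > 2\xi_i^*$, one gets $|\rho_i^{k,l'} - \varepsilon_i^k| \ge \mu_i - \xi_i^* > \xi_i^* > 0$ on $I^k$ and its sign cannot change there. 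This is the exact analogue of (\ref{FD_dettheo3})--(\ref{FD_dettheo30}).

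Next I would introduce a ``small-residual'' subinterval $J \subseteq I^k$ of the form $J = \{\, t \in I^k : |\tilde x_i^k(t)| \le c \,\}$ for a suitable constant $c$ proportional to $(\mu_i + \bar\rho^k_{i_{\max}})/b_i$, and apply the variation-of-constants formula starting from the left endpoint $t'$ of $J$. Using the previous step, $|\tilde x_i^k(t)|$ is bounded below by $\frac{\mu_i - \xi_i^*}{b_i}\bigl(1 - e^{-b_i(t-t')}\bigr) - c\,e^{-b_i(t-t')}$, which exceeds $c$ once $t - t'$ passes a logarithmic threshold; hence the length of $J$ is bounded by that logarithm, there is at most one such subinterval (as in the FD proof), and $|\tilde x_i^k(t)| > c$ for every $t \in I^k \setminus J$, where $l^k$ denotes the length of $I^k$.

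I would then bound the adaptive threshold from above: since $\bar\rho_i^k(x_s,u) \le \bar\rho^k_{i_{\max}}$ and $\int_{t_d}^s e^{-b_i(s-\tau)}\,d\tau < 1/b_i$ for every $s$, the sliding-window norm in (\ref{FI_threshold}) satisfies $\bar e_i^k(t_b^k) \le (\xi_i^* + \bar\rho^k_{i_{\max}})/b_i$. Combining this with the previous estimate,
\[
\bigl\| \tilde x_i^k(t_b^k) \bigr\|_1 \;\ge\; \frac1T \int_{I^k \setminus J} \bigl| \tilde x_i^k(\tau) \bigr|\, d\tau \;>\; \frac1T\,(l^k - |J|)\,c ,
\]
and the length condition (\ref{FI_isocond2}) is exactly what forces the right-hand side to be at least $(\xi_i^* + \bar\rho^k_{i_{\max}})/b_i \ge \bar e_i^k(t_b^k)$, so that $\|\tilde x_i^k(t_b^k)\|_1 > \bar e_i^k(t_b^k)$ and fault $k$ is ruled out at time $t_b^k$. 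Since this applies to every $k \ne l$, while (\ref{FI_error-dynamics1}) (established under Assumption~\ref{ASS_fault-bound}) already guarantees $\|\tilde x_i^l(t)\|_1 \le \bar e_i^l(t)$ for all $t > t_d$ and all $i$, the decision logic isolates $l$ uniquely, with $t_{iso} = \max_{k \ne l} t_b^k$.

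The step I expect to be the main obstacle is not conceptual but the bookkeeping of constants: unlike the FD case where the threshold is the fixed number $\bar e_i^0$, here $\bar e_i^k(t)$ is itself a sliding-window $\mathcal L_1$ norm of a convolution, so the auxiliary constant $c$ and the escape-time bound on the length of $J$ must be chosen so that, after passing to the conservative constant bound $(\xi_i^* + \bar\rho^k_{i_{\max}})/b_i$, the requirement on $l^k$ collapses precisely to the two-logarithm expression (\ref{FI_isocond2}); lining up that arithmetic --- and verifying the ``at most one small-residual subinterval'' claim in the presence of the time-varying forcing $\rho_i^{k,l'}$ --- is where the real care is needed.
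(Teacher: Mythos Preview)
Your plan follows the FD template closely, but it diverges from the paper's proof in one structural respect that prevents you from landing on the stated condition~(\ref{FI_isocond2}). The paper does \emph{not} bound the adaptive threshold $\bar e_i^k(t)$ by the constant $(\xi_i^*+\bar\rho^k_{i_{\max}})/b_i$ and then compare. Instead it introduces the pointwise quantity
\[
\vartheta_i^k(t)=\frac{\xi_i^*}{b_i}+\int_{t_d}^{t}e^{-b_i(t-\tau)}\bar\rho_i^k(x_s,u)\,d\tau,
\]
so that $\bar e_i^k(t)=\tfrac1T\int_{t-T}^{t}\vartheta_i^k(s)\,ds$, and then analyzes the \emph{difference} $|\tilde x_i^k(t)|-\vartheta_i^k(t)$ directly. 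This leads to a \emph{two-level} nested construction: first a set $I_1^k\subseteq I^k$ where $|\tilde x_i^k|-\vartheta_i^k\le \frac{\mu_i-2\xi_i^*}{2b_i}$, whose length is bounded by $\frac1{b_i}\ln\frac{3\mu_i+4\bar\rho^k_{i_{\max}}}{\mu_i-2\xi_i^*}$, and then inside it a set $I_2^k$ where $|\tilde x_i^k|-\vartheta_i^k\le 0$, with length bounded by $\frac1{b_i}\ln\frac{\mu_i+2\bar\rho^k_{i_{\max}}+\xi_i^*}{\mu_i-2\xi_i^*}$. The final estimate splits $[t_b^k-T,t_b^k]$ into four pieces $\bigl([t_b^k-T,t_b^k]\setminus I^k\bigr)\cup(I^k\setminus I_1^k)\cup(I_1^k\setminus I_2^k)\cup I_2^k$ and uses a lower bound on $|\tilde x_i^k|-\vartheta_i^k$ on each. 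The two distinct logarithms in~(\ref{FI_isocond2}), together with their differing coefficients $\frac{2\xi_i^*+2\bar\rho^k_{i_{\max}}}{\mu_i+2\bar\rho^k_{i_{\max}}}$ and $\frac{\mu_i-2\xi_i^*}{\mu_i+2\bar\rho^k_{i_{\max}}}$, are exactly the footprints of these two escape-time bounds.

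Your single subinterval $J=\{|\tilde x_i^k|\le c\}$ combined with the constant upper bound on $\bar e_i^k$ produces an inequality of the form $l^k\ge |J|+\tfrac{T(\xi_i^*+\bar\rho^k_{i_{\max}})}{b_i c}$ with only \emph{one} logarithm in $|J|$; no choice of $c$ will make this collapse to the two-term expression~(\ref{FI_isocond2}). So the obstacle you flagged at the end is not merely bookkeeping --- it is that the one-level scheme is structurally too coarse. To recover the theorem as stated, replace the constant-$c$ set $J$ by the pointwise comparison with $\vartheta_i^k$ and carry out the nested $I_1^k\supset I_2^k$ argument; the variation-of-constants step then has to be run twice, once at each level, which is where the second logarithm appears.
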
 

\begin{proof}

Consider the $k$-th residual signal $\tilde{x}_i^k (t)$ of (\ref{FI_residual}) and the associated FI threshold $\bar{e}_i^k(t) $ of (\ref{FI_threshold}), with $ k \in \{1,\cdots,N\}/\{l\}$. 
For the purpose of analysis, we introduce a new variable $\vartheta^k_i(t)$ satisfying 
\bgeq \label{FI_theo-theta1}
\begin{aligned}
	 \vartheta^k_i(t) := \frac{\xi_i^*}{b_i} + \int_{t_d}^t e^{-b_i(t-\tau)} \bar{\rho}_i^{k}(x_s, u) d \tau,  \quad t \geq t_d ,
\end{aligned}
\edeq
and thus we have
\bgeq \label{FI_theo-theta2}
\begin{aligned}
	 \vartheta^k_i(t) \leq &\, \frac{\xi_i^*}{b_i} + \int_{t_d}^t e^{-b_i(t-\tau)} \bar{\rho}^k_{i_{\max}} d \tau 
	  < \frac{\xi_i^*+ \bar{\rho}^k_{i_{\max}}}{b_i} , 
\end{aligned}
\edeq
where $\bar{\rho}^k_{i_{\max}} = \max \{\bar{\rho}_i^{k}(x_s, u), \forall t \geq t_d \} $. 
Obviously, to guarantee that $\left\| \tilde{x}_i^k (t) \right\|_1 > \bar{e}_i^k(t)$ holds at a time $t=t_b^k$, in light of definitions (\ref{FI_threshold}) and (\ref{FI_theo-theta1}), it is necessary to examine the magnitude of $\left| \tilde{x}_{i}^k(t) \right| - \vartheta^k_i(t)$ for  $t \in [t_b^k-T,t_b^k]$. 

We first consider the time interval $ I^k \subseteq [t_b^k-T,t_b^k] $. Assume that there exists a subinterval $I_1^{k} \subseteq I^k$ such that the residual signal $\left| \tilde{x}_{i}^k(t) \right| - \vartheta^k_i(t)$ has a very small magnitude, i.e., 
\bgeq \label{FI_theo-I1}
	I_1^k: = \left\lbrace t \in I^k: \, \left| \tilde{x}_{i}^k(t) \right| - \vartheta^k_i(t) \leq \frac{\mu_i - 2 \xi_i^*}{2 b_i} \right\rbrace,  
\edeq
where $\mu_i = \min \{ \left| \rho_{i}^{k,l'} (x_s,u) \right| - \bar{\rho}_i^{k}(x_s, u), \forall t \in I^k \} > 2 \xi_i^*$ from (\ref{FI_isocond1}). 
For $t \in I_1^k$, by denoting $t_{a}^{k'}= \min\{t, t \in I^k_1\}$, based on  (\ref{Identify_funcapp-bar}), the residual signal $\tilde{x}_{i}^k(t) $ of (\ref{FI_residual}) satisfies 
\bgeq \label{FI_theo-residual1}
\begin{aligned}
	&\, \left| \tilde{x}_i^k (t) \right| \\
	= &\, \left| \tilde{x}_i^k(t_a^{k'}) e^{-b_i (t-t_a^{k'})} + \int_{t_a^{k'}}^t e^{-b_i(t-\tau)} ( \rho_i^{k,l'}(x_s,u) - \varepsilon_{i}^k ) d \tau \right| \\ 
	\geq &\, \left| \int_{t_a^{k'}}^t e^{-b_i(t-\tau)} \rho_i^{k,l'}(x_s,u) d \tau \right| -  \int_{t_a^{k'}}^t e^{-b_i(t-\tau)} \left| \varepsilon_{i}^k  \right| d \tau \\
	&\, - \left| \tilde{x}_i^k(t_a^{k'}) \right| e^{-b_i (t-t_a^{k'})}  \\ 
	> &\, \left| \int_{t_a^{k'}}^t e^{-b_i(t-\tau)} \rho_i^{k,l'}(x_s,u) d \tau \right| - \frac{\xi_i^*}{b_i} - \left| \tilde{x}_i^k(t_a^{k'}) \right| e^{-b_i (t-t_a^{k'})}.   
\end{aligned} 
\edeq
Under condition (\ref{FI_isocond1}), it is seen that for all $ t \in I_1^k$, $\rho_i^{k,l'}(x_s,u)$ has an unchanged sign. Then, inequality (\ref{FI_theo-residual1}) yields:
\bgeq \label{FI_theo-residual2}
\begin{aligned}
	\left| \tilde{x}_i^k (t) \right| 
	 >&\, \int_{t_a^{k'}}^t e^{-b_i(t-\tau)}  \left| \rho_i^{k,l'}(x_s,u) \right| d \tau - \frac{\xi_i^*}{b_i} \\
	 &\,- \left| \tilde{x}_i^k(t_a^{k'}) \right| e^{-b_i (t-t_a^{k'})}. 
\end{aligned} 
\edeq
From (\ref{FI_theo-theta1}) and (\ref{FI_theo-residual2}), we have:  
\bgeq \label{FI_theo-error1}
\begin{aligned}
	\left| \tilde{x}_i^k (t) \right| & - \vartheta^k_i(t) > \int_{t_a^{k'}}^t e^{-b_i(t-\tau)}  \left| \rho_i^{k,l'}(x_s,u) \right| d \tau - \frac{2 \xi_i^*}{b_i} \\
	& - \int_{t_d}^t e^{-b_i(t-\tau)} \bar{\rho}_i^{k}(x_s, u) d \tau - \left| \tilde{x}_i^k(t_a^{k'}) \right| e^{-b_i (t-t_a^{k'})} \\
	= & \int_{t_a^{k'}}^t e^{-b_i(t-\tau)} ( \left| \rho_i^{k,l'}(x_s,u) \right| - \bar{\rho}_i^{k}(x_s, u) ) d \tau - \frac{2 \xi_i^*}{b_i} \\
	& - ( \vartheta^k_i(t_a^{k'}) - \frac{\xi_i^*}{b_i} ) e^{-b_i (t-t_a^{k'})} - \left| \tilde{x}_i^k(t_a^{k'}) \right| e^{-b_i (t-t_a^{k'})} \\
	\geq & \int_{t_a^{k'}}^t e^{-b_i(t-\tau)} \mu_i d \tau - \frac{ 2\xi_i^*}{b_i}\\
	& - ( 2\vartheta^k_i(t_a^{k'}) + \frac{\mu_i - 4 \xi_i^*}{ 2 b_i} ) e^{-b_i (t-t_a^{k'})} \\
	> & \frac{\mu_i}{b_i} ( 1 - e^{-b_i (t-t_a^{k'})} ) - \frac{ 2\xi_i^*}{b_i} - \frac{4 \bar{\rho}^k_{i_{\max}} + \mu_i}{ 2 b_i} e^{-b_i (t-t_a^{k'})} , 
\end{aligned} 
\edeq
where $\left| \tilde{x}_i^k(t_a^{k'}) \right| \leq \vartheta^k_i(t_a^{k'}) + \frac{\mu_i - 2\xi_i^*}{2 b_i}$, $ \left| \rho_i^{k,l'}(x_s,u) \right| - \bar{\rho}_i^{k}(x_s, u) \geq \mu_i $, and $\vartheta^k_i(t_a^{k'}) <  \frac{\xi_i^*+ \bar{\rho}^k_{i_{\max}}}{b_i}$, according to
(\ref{FI_theo-theta2}) and (\ref{FI_theo-I1}). 
Based on (\ref{FI_theo-error1}), it can be deduced that: $\left| \tilde{x}_i^k (t) \right| - \vartheta^k_i(t) > \frac{\mu_i - 2 \xi_i^*}{2 b_i}$ holds for $t - t_a^{k'} > \frac{1}{b_i} \textup{ln} \frac{ 3\mu_i + 4 \bar{\rho}^k_{i_{\max}}}{\mu_i - 2\xi_i^*}$. Thus, the length of time interval $I_1^k$ in (\ref{FI_theo-I1}), denoted by $l_1$, satisfies:
\bgeq \label{FI_theo-I1length}
\begin{aligned}
	l_1 \leq \frac{1}{b_i} \textup{ln} \frac{ 3\mu_i + 4\bar{\rho}^k_{i_{\max}}}{\mu_i - 2\xi_i^*}. 
\end{aligned} 
\edeq
It is easy to verify that there exists at most one subinterval $I_1^k$ of (\ref{FI_theo-I1}) in the time interval $I^k$, which implies  
\bgeq \label{FI_theo-I-I1}
\begin{aligned}
	\left| \tilde{x}_{i}^k(t) \right| - \vartheta^k_i(t) > \frac{\mu_i - 2 \xi_i^*}{2b_i}, \quad \forall t \in I^k-I_1^k. 
\end{aligned} 
\edeq

Next, following a similar line of the above analysis, we can further deduce that there exists at most one subinterval $I_2^k$ in the time interval $I_1^k$, such that: 
\bgeq \label{FI_theo-I2}
\begin{aligned}
	 &\, \left| \tilde{x}_{i}^k(t) \right| - \vartheta^k_i(t) \leq 0, \quad \forall t \in I_2^k; \\
	 0< &\, \left| \tilde{x}_{i}^k(t) \right| - \vartheta^k_i(t) \leq \frac{\mu_i - 2 \xi_i^*}{2b_i}  , \quad \forall t \in I_1^k-I_2^k,
\end{aligned}
\edeq
and the length of the time interval $I_2^k$, denoted by $l_2$, satisfies: 
\bgeq \label{FI_theo-I2length}
\begin{aligned}
	l_2 \leq \frac{1}{b_i} \textup{ln} \frac{ \mu_i + 2\bar{\rho}^k_{i_{\max}} + \xi_i^*}{\mu_i - 2\xi_i^*}. 
\end{aligned} 
\edeq
Particularly, note that $\vartheta^k_i(t) < \frac{\xi_i^*+ \bar{\rho}^k_{i_{\max}}}{b_i}$ from (\ref{FI_theo-theta2}), inequality (\ref{FI_theo-I2}) yields:
\bgeq \label{FI_theo-I1I2}
\begin{aligned}
	 - \frac{\xi_i^*+ \bar{\rho}^k_{i_{\max}}}{b_i} < & \left| \tilde{x}_{i}^k(t) \right| - \vartheta^k_i(t) \leq 0, \quad \forall t \in I_2^k. \\
\end{aligned}
\edeq

Finally, consider the signal $\left| \tilde{x}_i^k (t) \right| - \vartheta^k_i(t)$ in the time interval $[t_b^k-T,t_b^k]$. Dividing $[t_b^k-T,t_b^k]$ into four subintervals, i.e., $[t_b^k-T,t_b^k]= ([t_b^k-T,t_b^k]-I^k) \cup (I^k-I_1^k) \cup (I_1^k-I_2^k) \cup I_2^k$, from (\ref{FI_theo-theta2}) and (\ref{FI_theo-I1length})--(\ref{FI_theo-I1I2}), we obtain: 
\bgeq \label{FI_theo-error2}
\begin{aligned}
	& \int_{[t_b^k-T,t_b^k]-I^k} ( \left| \tilde{x}_i^k (\tau) \right| - \vartheta^k_i(\tau) ) d \tau \\
	& > \int_{[t_b^k-T,t_b^k]-I^k} ( - \frac{\xi_i^*+ \bar{\rho}^k_{i_{\max}}}{b_i} ) d \tau = - ( T-l^k ) \frac{\xi_i^*+ \bar{\rho}^k_{i_{\max}}}{b_i} ; \\
	& \int_{I^k-I_1^k} ( \left| \tilde{x}_i^k (\tau) \right| - \vartheta^k_i(\tau) ) d \tau > \int_{I^k-I_1^k} \frac{\mu_i - 2\xi_i^*}{2 b_i} d \tau\\
	& = (l^k-l_1 ) \frac{\mu_i - 2\xi_i^*}{2 b_i} > ( l^k - \frac{1}{b_i} \textup{ln} \frac{ 3\mu_i + 4\bar{\rho}^k_{i_{\max}}}{\mu_i - 2\xi_i^*} )  \frac{\mu_i - 2\xi_i^*}{2 b_i} ;\\
	& \int_{I_1^k-I_2^k} ( \left| \tilde{x}_i^k (\tau) \right| - \vartheta^k_i(\tau) ) d \tau > ( l_1-l_2 )0 = 0 ; \\
	& \int_{I_2^k} ( \left| \tilde{x}_i^k (\tau) \right| - \vartheta^k_i(\tau) ) d \tau > \int_{I_2^k} ( - \frac{\xi_i^*+ \bar{\rho}^k_{i_{\max}}}{b_i} ) d \tau \\
	& = - l_2 \frac{\xi_i^*+ \bar{\rho}^k_{i_{\max}}}{b_i} > - ( \frac{1}{b_i} \textup{ln} \frac{ \mu_i + 2\bar{\rho}^k_{i_{\max}} + \xi_i^*}{\mu_i - 2\xi_i^*} )  \frac{\xi_i^*+ \bar{\rho}^k_{i_{\max}}}{b_i} . \\
\end{aligned}
\edeq
Based on this, with condition (\ref{FI_isocond2}), we have: 
\bgeq \label{FI_theo-FI}
\begin{aligned}
	&\, \left\| \tilde{x}_i^k (t_b^k) \right\|_1 - \bar{e}^k_i(t_b^k) \\
	= & \frac{1}{T} \int_{([t_b^k-T,t_b^k]-I^k) \cup (I^k-I_1^k) \cup (I_1^k-I_2^k) \cup I_2^k} ( \left| \tilde{x}_i^k (\tau) \right| - \vartheta^k_i(\tau) ) d \tau \\
	> & \frac{1}{T} \Big\lbrace ( \frac{ \mu_i + 2 \bar{\rho}^k_{i_{\max}} }{ 2b_i} ) l^k - ( \frac{1}{b_i} \textup{ln} \frac{ 3\mu_i + 4\bar{\rho}^k_{i_{\max}}}{\mu_i - 2\xi_i^*} ) \frac{ \mu_i - 2 \xi_i^*}{ 2 b_i} \\
	&\, - ( T + \frac{1}{b_i} \textup{ln} \frac{ \mu_i + 2\bar{\rho}^k_{i_{\max}} + \xi_i^*}{\mu_i - 2\xi_i^*} ) \frac{\bar{\rho}^k_{i_{\max}} + \xi_i^*}{b_i} \Big\rbrace \geq 0. \\
\end{aligned}
\edeq  
Thus, $\left\| \tilde{x}_i^k (t_b^k) \right\|_1 > \bar{e}^k_i(t_b^k) $ holds, and the possibility that the fault $l'$ occurring in system (\ref{FI_ODE-sys}) is similar to  the trained fault $k$ (for any $k\in \{1,\cdots,N\}/\{l\}$) can be excluded. Consequently, the fault $l'$ will be identified similar to the trained fault $l$, and the isolation time is obtained as: $t_{iso}=\max \left\{t_b^k, \forall k \in \{1,\cdots,N\}/\{l\} \right\} $. This ends the proof. 
\end{proof}

\begin{remark}
The isolatability conditions (\ref{FI_isocond1})--(\ref{FI_isocond2}) show that, for different types of faults $\phi_i^{l'}(x_s,u)$ and $\phi_i^k(x_s,u)$ ($l\neq k$), if their dynamic difference $\rho_i^{k,l'} (x_s,u)$ has sufficiently large magnitudes (larger than the bound function $\bar{\rho}_i^{k}(x_s, u) + 2 \xi^*_i $) over some time interval $[t_a^k,t_b^k]$, then, these two types of faults can be effectively isolated. 
Essentially, these conditions imply that the FI process is achieved by utilizing the known nonlinear function $\bar{\rho}_i^{k}(x_s, u)$ to separate the faulty dynamics of $\phi_i^{l'}(x_s,u)$ and $\phi_i^k(x_s,u)$. 

\end{remark}

\begin{figure*} [htb!]
\centering
	\begin{subfigure}[t]{0.32 \textwidth}
	\includegraphics[width=1 \textwidth]{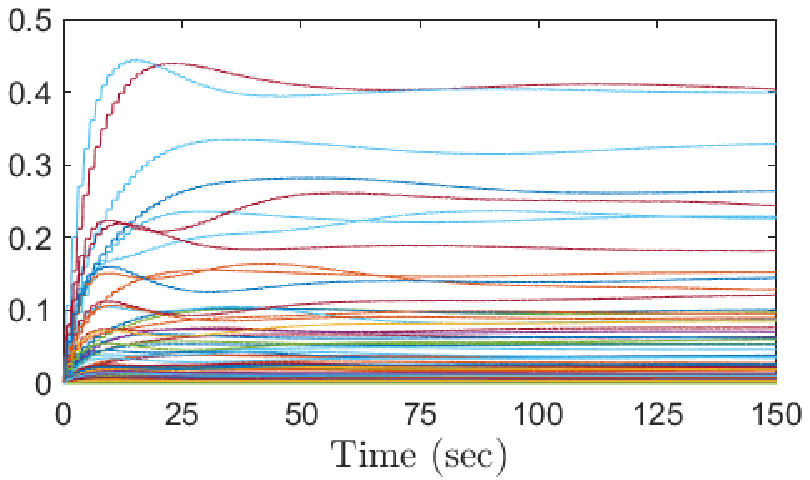}	
	\caption{}
	\label{1a_weight}
	\end{subfigure}
	\begin{subfigure}[t]{0.32 \textwidth}
	\includegraphics[width=1 \textwidth]{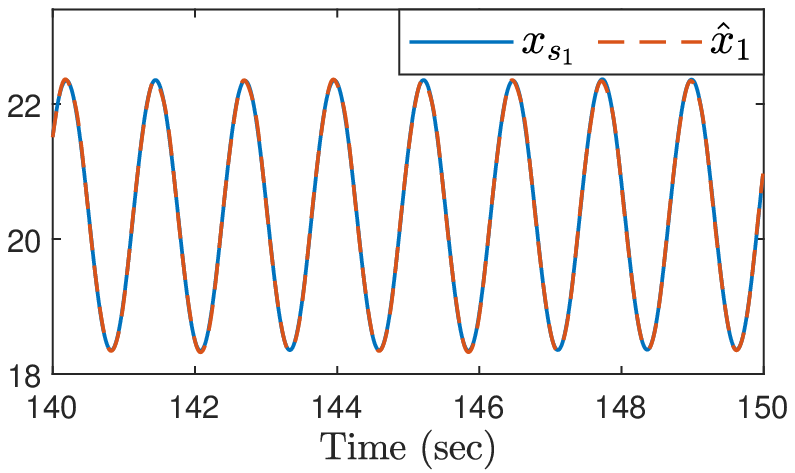}	
	\caption{}
	\label{1b_xhat}
	\end{subfigure}
	\begin{subfigure}[t]{0.32 \textwidth}
	\includegraphics[width=1 \textwidth]{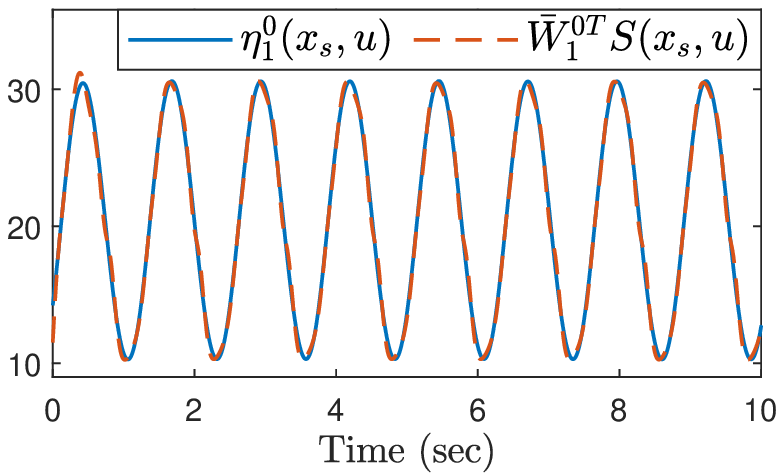}	
	\caption{}
	\label{1d_func}
	\end{subfigure}
	\caption{ Identification of function $\eta_1^0(x_s,u)$ with identifier (\ref{Identify_obs}). (a) Convergence of NN weight $\hat{W}_1^0$; (b) estimation performance of $x_{s_1}$ by $\hat{x}_1$; and (c) function approximation performance of $\eta_{1}^0(x_s,u)$ by $\bar{W}_1^{0\top}S(x_s,u)$.  }
	\label{PLOT_identify-0}
\end{figure*}
%

%
\begin{figure*} [htb!]
\centering
	\begin{subfigure}[t]{0.85 \textwidth}
	\includegraphics[width=1 \textwidth]{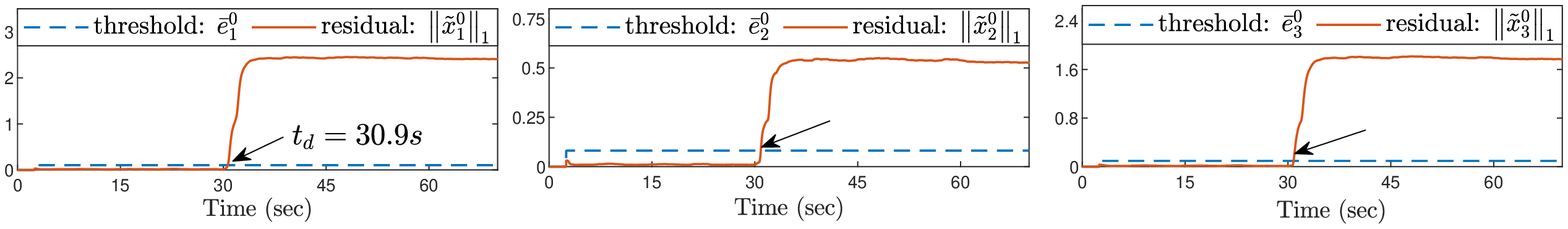}	
	\caption{}
	\label{3_fault1-FD}
	\end{subfigure}
	\begin{subfigure}[t]{0.85 \textwidth}
	\includegraphics[width=1 \textwidth]{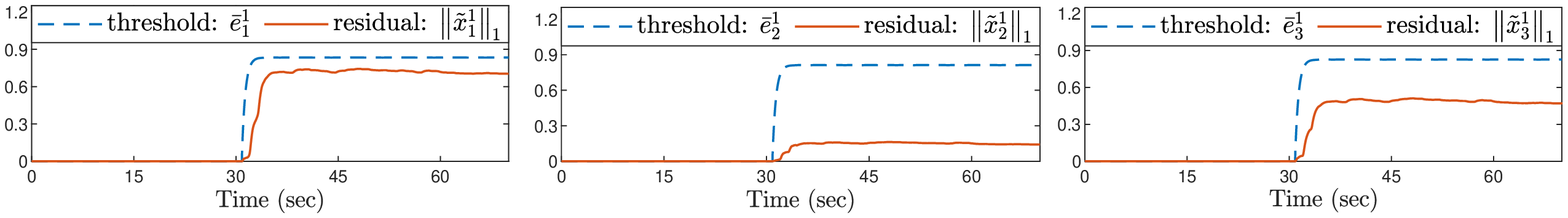}	
	\caption{}
	\label{3_fault1-FI1}
	\end{subfigure}
	\begin{subfigure}[t]{0.85 \textwidth}
	\includegraphics[width=1 \textwidth]{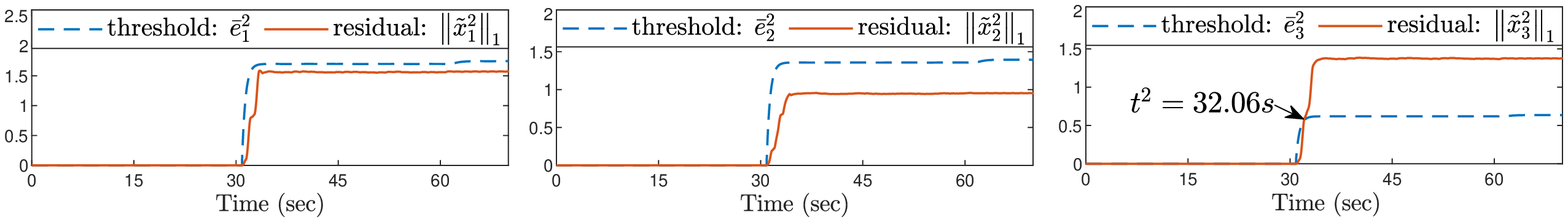}	
	\caption{}
	\label{3_fault1-FI2}
	\end{subfigure}
	\begin{subfigure}[t]{0.85 \textwidth}
	\includegraphics[width=1 \textwidth]{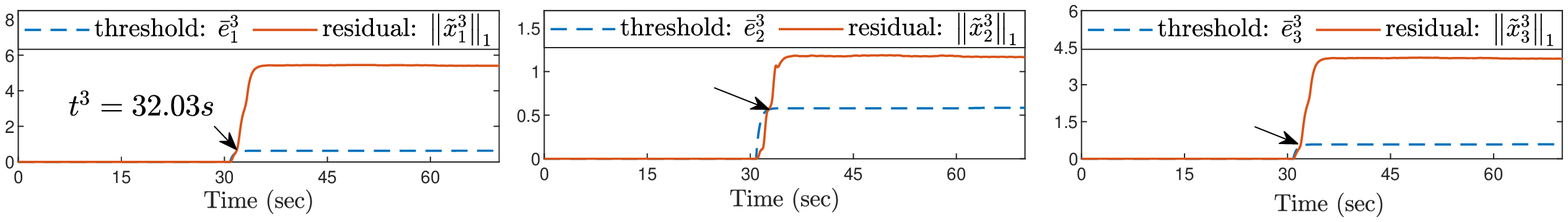}	
	\caption{}
	\label{3_fault1-FI3}
	\end{subfigure}
	\caption{ FDI performance when fault $1'$ occurs at time $t_0=30s$ in system (\ref{SIM_sys}): (a) FD residuals and thresholds; (b) $1$-st FI residuals and thresholds; (c) $2$-nd FI residuals and thresholds; and (d) $3$-rd FI residuals and thresholds. }
	\label{PLOT_FD-fault1m}
\end{figure*}


\begin{figure*} [htb!]
\centering
	\begin{subfigure}[t]{0.85 \textwidth}
	\includegraphics[width=1 \textwidth]{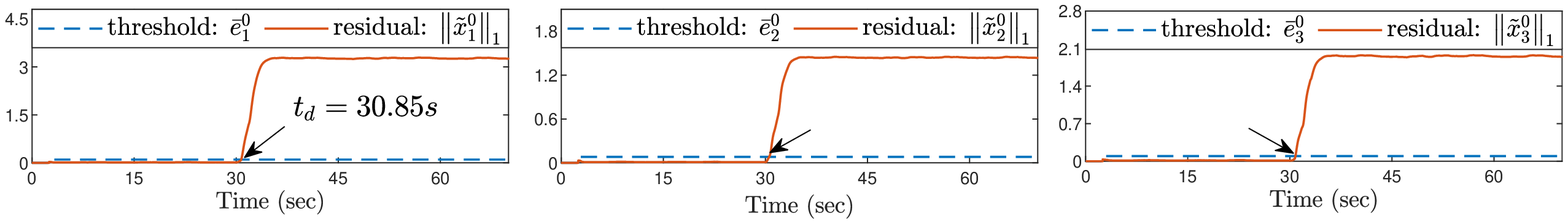}	
	\caption{}
	\label{4_fault2-FD}
	\end{subfigure}
	\begin{subfigure}[t]{0.85 \textwidth}
	\includegraphics[width=1 \textwidth]{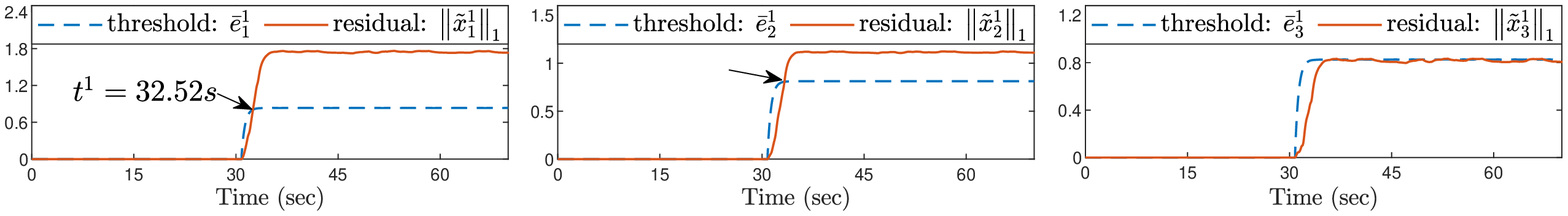}	
	\caption{}
	\label{4_fault2-FI1}
	\end{subfigure}
	\begin{subfigure}[t]{0.85 \textwidth}
	\includegraphics[width=1 \textwidth]{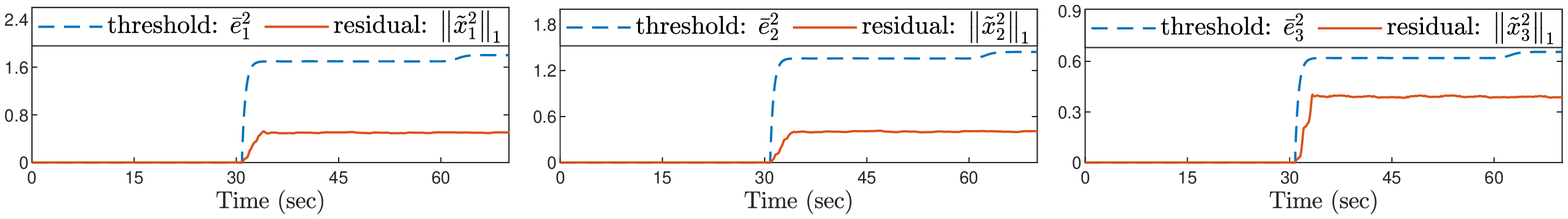}	
	\caption{}
	\label{4_fault2-FI2}
	\end{subfigure}
	\begin{subfigure}[t]{0.85 \textwidth}
	\includegraphics[width=1 \textwidth]{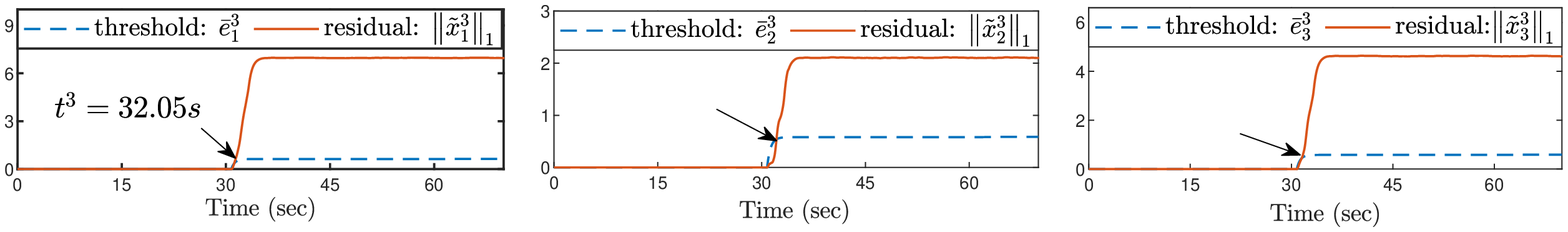}	
	\caption{}
	\label{4_fault2-FI3}
	\end{subfigure}
	\caption{ FDI performance when fault $2'$ occurs at time $t_0=30s$ in system (\ref{SIM_sys}): (a) FD residuals and thresholds; (b) $1$-st FI residuals and thresholds; (c) $2$-nd FI residuals and thresholds; and (d) $3$-rd FI residuals and thresholds. }
	\label{PLOT_FD-fault2m}
\end{figure*}


\begin{figure*} [htb!]
\centering
	\begin{subfigure}[t]{0.85 \textwidth}
	\includegraphics[width=1 \textwidth]{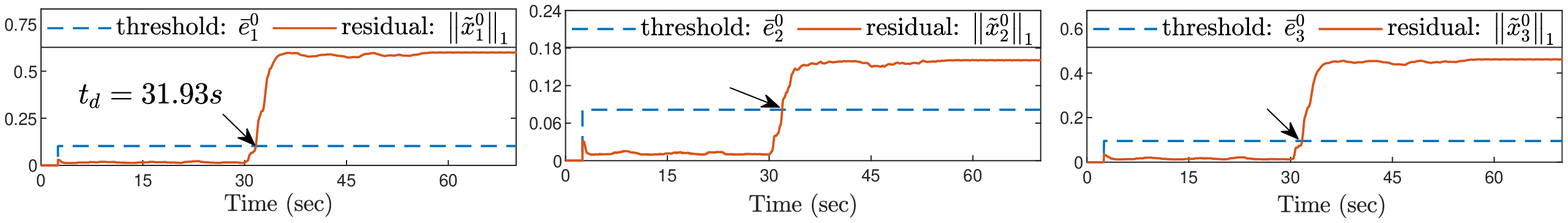}	
	\caption{}
	\label{5_fault3-FD}
	\end{subfigure}
	\begin{subfigure}[t]{0.85 \textwidth}
	\includegraphics[width=1 \textwidth]{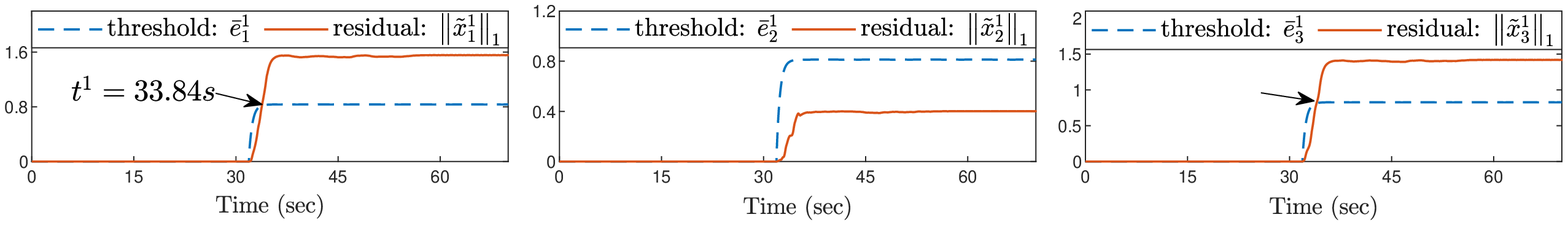}	
	\caption{}
	\label{5_fault3-FI1}
	\end{subfigure}
	\begin{subfigure}[t]{0.85 \textwidth}
	\includegraphics[width=1 \textwidth]{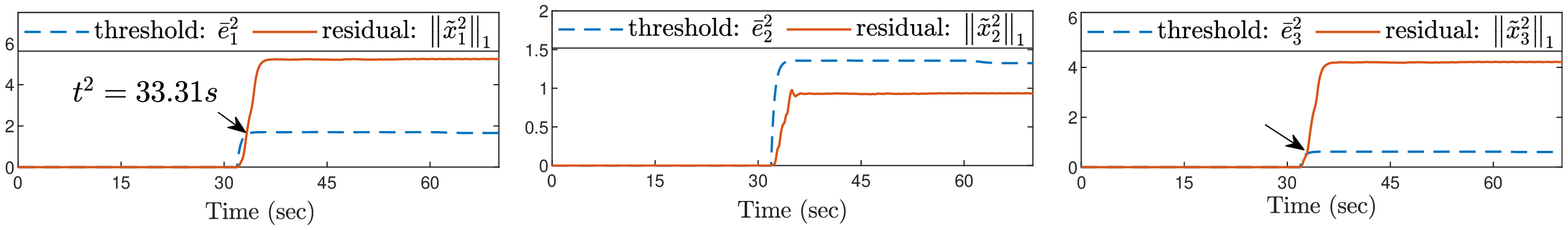}	
	\caption{}
	\label{5_fault3-FI2}
	\end{subfigure}
	\begin{subfigure}[t]{0.85 \textwidth}
	\includegraphics[width=1 \textwidth]{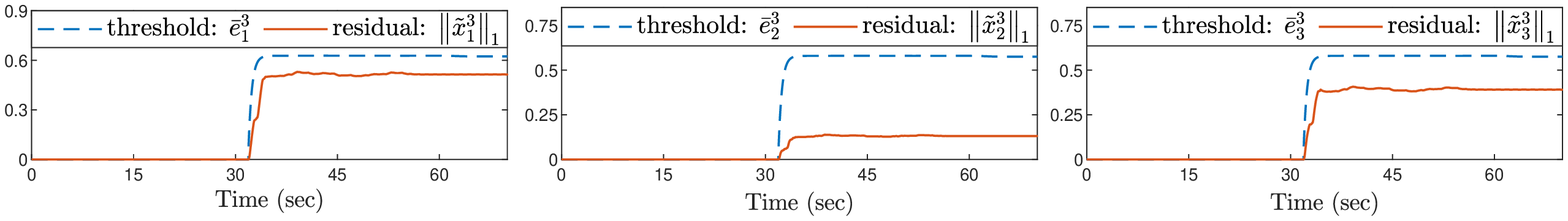}	
	\caption{}
	\label{5_fault3-FI3}
	\end{subfigure}
	\caption{ FDI performance when fault $3'$ occurs at time $t_0=30s$ in system (\ref{SIM_sys}): (a) FD residuals and thresholds; (b) $1$-st FI residuals and thresholds; (c) $2$-nd FI residuals and thresholds; and (d) $3$-rd FI residuals and thresholds. }
	\label{PLOT_FD-fault3m}
\end{figure*}

\section{Simulation Studies} \label{sec_simulation}
Consider a typical transport-reaction process in chemical industry, i.e., a long, thin catalytic rod in a reactor, which is borrowed from \cite{ElFG.AIChE07, WuL.TNN08}. The spatio-temporal evolution of the rod temperature is described by the following parabolic PDE:
\bgeq \label{SIM_sys}
\begin{aligned}
	\frac{\partial x(z,t)}{\partial t} = \,& \frac{\partial^2 x(z,t)}{\partial z^2}+ \beta_T (e^{-\frac{\gamma}{1+x}}-e^{-\gamma}) \\
	& + \beta_u(b(z) u(t) - x(z,t)) + \beta(t-t_0)\phi^k(x,u), 
\end{aligned}
\edeq
with boundary and initial conditions: $x(0,t) =0, \, x(\pi,t)=0, \, x(z,0) = 15 \sin(z)$, where $x(z,t)$ denotes the rod temperature, $u(t)$ is the manipulated input, $f(x,u) = \beta_T (e^{-\frac{\gamma}{1+x}}-e^{-\gamma})+ \beta_u(b(z) u - x) $ is an unknown function representing the system uncertainty, and $\phi^k(x,u)$ is the fault function. $\beta_T=50$ is a heat of action, $\gamma=4$ is an activation energy, $\beta_u=2$ is a heat transfer coefficient, $b(z)=1.5 \sin(z)+1.8 \sin(2z) + 2\sin(3z)$ is the actuator distribution function, $t_0=30s$ is the fault occurrence time. 
For simulation purpose, the system input is set as $u(t) = 1.1+2 \sin(5t) - 2 \cos(5t)$. Three types of faults are considered. 
(i) Fault 1: an actuator fault with a faulty actuator distribution function:  $b'(z) = 1.8 \sin(z)+1.8 \sin(2z) + 2\sin(3z)$, leading to the associated fault function $\phi^1(x,u) = \tilde{ b}(z) \beta_u u $ with $\tilde{ b}(z) = b(z)-b'(z) = - 0.3\sin(z)$. 
(ii) Fault 2: a state fault with fault function $\phi^2(x,u)= \tilde{h}(z) x$, where $\tilde{h}(z) = h(z-1)-h(z-1.3)$ and $h(\cdot)$ is a heaviside function.   
(iii) Fault 3: a component fault with a faulty system parameter: $\beta_T' = 48$, and the associated fault function is thus $\phi^3(x,u)= \tilde{ \beta_T} (e^{-\frac{\gamma}{1+x}}-e^{-\gamma})$ with $\tilde{\beta}_T = \beta_T -\beta_T'=2$. 
%

For the PDE system (\ref{SIM_sys}), we first derive its approximate ODE model. 
Specifically, the eigenvalue problem of the spatial differential operator in (\ref{SIM_sys}), i.e., $\mathcal{A}x = \frac{\partial^2 x}{\partial z^2}, \, x \in D(\mathcal{A}):= \left\{x \in \mathcal{H} \,|\,  x(0,t) =0, \, x(\pi,t)=0 \right\}$,  
can be solved analytically by using the method of \cite{DenLC.TCST05}, resulting in the solution: $\lambda_i = -i^2, \, \varphi_i(z) = \sqrt{\frac{2}{\pi}} \sin (iz) $ with $ i = 1,\cdots, \infty$. By choosing the first $m=3$ number of eigenvalues as dominant ones, we can obtain the following ODE system to describe the dominant dynamics of the PDE system (\ref{SIM_sys}): 
\bgeq \label{SIM_ODE-sys}
\begin{aligned}
	\dot{x}_{s_i} = \lambda_i x_{s_i} + f_{s_i}(x_s,u) + &\, \beta(t-t_0) \phi_{s_i}^k(x_s,u), \\
\end{aligned}
\edeq
where $i=1,2,3,\,k=1,2,3$, $ x_{s_i}(t) = \int_0^{\pi} x(z,t) \varphi_i(z) \, dz $, $f_{s_i}(x_{s},u) = - \beta_ux_{s_i}(t)+ \int_0^{\pi} (\beta_T (e^{-\frac{\gamma}{1+ \sum_{i=1}^3 x_{s_i}(t) \varphi_i(z) }}-e^{-\gamma}) + \beta_u b(z) u(t) ) \varphi_i(z) \, dz$, $ \phi^1_{s_i}(x_{s}, u) = \beta_u u(t) \int_0^{\pi} \tilde{ b}(z) \varphi_i(z) \, dz$, $ \phi^2_{s_i}(x_{s}, u)= \int_{0}^{\pi} x(z,t) \tilde{ h}(z) \varphi_i(z) \, dz $, and $ \phi^3_{s_i}(x_{s}, u) = \tilde{\beta_T} \int_{0}^{\pi} (e^{-\frac{\gamma}{1+ \sum_{i=1}^3 x_{s_i}(t) \varphi_i(z) }}-e^{-\gamma}  ) \varphi_i(z) \, dz $. 
%
Note that the model (\ref{SIM_ODE-sys}) cannot be directly used for the design of FDI scheme, due to the existence of uncertain functions $f_{s_i}(x_s, u)$ and $\phi^k_{s_i}(x_s, u)$. 
The state signals $x_{s_i}$, which are needed for the subsequent implementation, will be obtained based on the measurement from the original PDE system (\ref{SIM_sys}) via $ x_{s_i}(t) = \int_0^{\pi} x(z,t) \varphi_i(z) \, dz $, as discussed in Remark \ref{rem_identi-xs}. 

With the system signals $(x_s,u)$, we can implement the identification process for the uncertain dynamics $\eta^k_i(x_s,u)= f_{s_i}(x_s,u)+\phi^k_{s_i}(x_s,u)$ ($\forall i =1,2,3$) of system (\ref{SIM_ODE-sys}) under all normal and faulty modes with $k=0,1,2,3$. 
Specifically, according to (\ref{Identify_obs}), the RBF network $\hat{W}_i^{k\top}S(x_s,u)$ is constructed in a regular lattice, with nodes $N_n = 14 \times 9 \times 8 \times 13$, the center evenly spaced on $[17.5,24] \times [-1,3] \times [0,3.5] \times [-2,4]$ and the widths $\nu_i = 0.5$. The design parameters in  (\ref{Identify_obs}) are $a_i = 4$, $\Gamma_i= 0.35$ and $\sigma_i=0.001$ ($\forall i=1,2,3$). The initial conditions are set as $\hat{W}_i^k(0)=0$ and $\hat{x}(0) = x_s(0)$. 
Consider the system (\ref{SIM_ODE-sys}) operating in normal mode (with $k=0$, $\phi^0_{s_i}(x_s,u) \equiv 0$), with identifier (\ref{Identify_obs}), the identification performance for the dynamics of the $1$-st state subsystem of (\ref{SIM_ODE-sys}) is shown in Fig. \ref{PLOT_identify-0}. Particularly, Fig. \ref{1a_weight} shows the convergence of NN weight $\hat{W}_1^0$. Fig. \ref{1b_xhat} shows the accurate tracking performance of $\hat{x}_1$ over the system state $x_{s_1} $. 
Based on the identification result, a constant model $\bar{W}^{0\top}_1S(x_s,u)$ is obtained by $\bar{W}_1^0 = \frac{1}{10} \int_{140}^{150} \hat{W}_1^0 (\tau) \, d \tau$, which can achieve accurate approximation of the associated unknown function $\eta_{1}^0(x_s,u)$, as illustrated in Fig. \ref{1d_func}. 
%
%
Then, following a similar procedure established as above, simulation results for the cases of faulty modes $k=1,2,3$ can also be obtained, which are similar to those in Fig. \ref{PLOT_identify-0} and thus omitted here. 
Consequently, with the method given in Remark \ref{rem_epsilon*}, 
the values of $\xi_i^*$ ($i=1,2,3$) that are needed for implementing the subsequent FDI scheme can be obtained as $\xi_1^*=0.0860$, $\xi_2^*=0.0430$, and $\xi_3^*=0.0703$. 

Based on the above identification results, we can implement the proposed FDI scheme for system (\ref{SIM_ODE-sys}).  
Specifically, the FD estimators (\ref{FD_estimator}) are implemented  with  constant NN models $\bar{W}_i^{0\top}S(x_s,u)$ and parameters $b_i^0 =2$ ($\forall i=1,2,3$). The FD thresholds (\ref{FD_threshold}) are implemented with parameters $\xi_1^*=0.0860$, $\xi_2^*=0.0430$, $\xi_3^*=0.0703$, and $\varrho_i =0.12$. The parameter of $\mathcal{L}_1$ norm is set as $T=2.5s$. 
Similarly, the FI estimators (\ref{FI_estimator}) are implemented with constant NN models $\bar{W}_i^{k\top}S(x_s,u)$ and parameters $b_i =2$ ($\forall i=1,2,3$ and $\forall k=1,2,3$). 
The FI adaptive thresholds (\ref{FI_threshold}) are implemented with given functions $\bar{\rho}^k_i (x_s, u) = \int_{0}^{\pi} \bar{\phi}^k(x,u) \left| \varphi_i(z)\right| d z$, where $\bar{\phi}^1(x,u) = \left| \Delta_{\tilde{ b}} \beta_u u \right|$ with $\Delta_{\tilde{ b}} = 0.25$, $\bar{\phi}^2(x,u) = \left| \Delta_{\tilde{ h}} x \right|$ with $\Delta_{\tilde{ h}} = h(z-1) - h(z-1.3)$, and $\bar{\phi}^3(x,u) = \left| \Delta_{\tilde{ \beta_T}} (e^{-\frac{\gamma}{1+x}}-e^{-\gamma}) \right|$ with $\Delta_{\tilde{\beta_T}} = 1$. 
For testing purpose, we assume three occurring faults to be detected and isolated, including fault $1'$: $\phi^{1'}(x,u) = \tilde{b}'(z) \beta_u u $ with $\tilde{b}'(z) = -0.5\sin(z) $; fault $2'$: $\phi^{2'}(x,u) = \tilde{ h}'(z) x$ with $\tilde{ h}'(z) = h(z-1)-h(z-1.2)$; and fault $3'$: $\phi^{3'}(x,u) = \tilde{ \beta_T} ' (e^{-\frac{\gamma}{1+x}} - e^{-\gamma})$ with $\tilde{ \beta_T}' = 49$. These faults satisfy $ \left| \phi^k(x,u) - \phi^{k'}(x,u) \right| \leq \bar{\phi}^k(x,u)$ and $\left| \rho^{k,k'}_{i}(x_s,u) \right| = \left| \int_{0}^{\pi} ( \phi^{k'}(x,u) - \phi^{k}(x,u) ) \varphi_i(z) dz \right| \leq  \int_{0}^{\pi} \bar{\phi}^k(x,u) \left| \varphi_i(z)\right| d z = \bar{\rho}^k_i (x_s, u)$ 
%
for all $k=1,2,3$ and $i=1,2,3$, which verifies Assumption \ref{ASS_fault-bound}. 

In the testing phase, consider the fault $1'$ occurring in system (\ref{SIM_sys}) at time $t_0=30s$, the associated FDI simulation results are displayed in Fig. \ref{PLOT_FD-fault1m}. 
We first observe the FD performance in Fig. \ref{3_fault1-FD}. It is shown that once the fault $1'$ occurs at time $t_0=30s$, all FD residuals $\left\| \tilde{x}_i^0 \right\|_1$ ($i=1,2,3$) increase and become larger than the associated thresholds $\bar{e}_i^0$ at time $t_d = 30.9s$, indicating that the occurring fault $1'$ can be detected at time $t_d = 30.9s$. 
Once fault $1'$ is detected, the FI system consisting of FI estimators (\ref{FI_estimator}) and FI adaptive thresholds (\ref{FI_threshold}) is activated, and the FI performance can be seen in Figs. \ref{3_fault1-FI1}-\ref{3_fault1-FI3}. 
For the performance of the matched/similar FI estimator (i.e., the $1$-st FI estimator) as shown in Fig. \ref{3_fault1-FI1}, it is seen that all the residual signals $\left\| \tilde{x}_i^1(t) \right\|_1$ ($i=1,2,3$) remain smaller than the associated threshold $\bar{e}_i^1 (t) $ for all time $t > t_d = 30.9s$. 
For those estimators with unmatched/unsimilar faults (i.e., the $2$-nd and the $3$-rd FI estimators), the associated performance is presented in Figs. \ref{3_fault1-FI2}, \ref{3_fault1-FI3}, respectively. It is shown that the $2$-nd FI residual $\left\| \tilde{x}_i^2 (t) \right\|_1$ (with $i=3$) becomes larger than the threshold $\bar{e}_i^2 (t)$ at time $ t^2=32.06s$ (see Fig. \ref{3_fault1-FI2}); and all the $3$-rd FI residuals $\left\| \tilde{x}_i^3(t) \right\|_1$ (with all $i=1,2,3)$ become larger than the respective thresholds $\bar{e}_i^3(t)$ at time $ t^3= 32.03s$ (see Fig. \ref{3_fault1-FI3}). 
Thus, it can be deduced that the occurring fault $1'$ is similar to the fault $1$, and the isolation time can be obtained at: $t_{iso} = \max \{ t^2, t^3 \} = 32.06s$.
Next, we further consider the cases when faults $2'$ and $3'$ are occurring respectively in system (\ref{SIM_sys}) at time $t_0=30s$, and the associated FDI performances are illustrated in Figs. \ref{PLOT_FD-fault2m} and \ref{PLOT_FD-fault3m}, respectively. It is seen that the occurring fault $2'$ is detected at time $t_d = 30.85s$ and isolated at time $t_{iso} = 32.52s$; while the occurring fault $3'$ is detected at time $t_d = 31.93s$ and isolated at time $t_{iso} = 33.84s$. 
These simulation results demonstrate feasibility and effectiveness of our proposed  FDI scheme.

\begin{figure*} [htb!]
\centering
	\begin{subfigure}[t]{0.32 \textwidth}
	\includegraphics[width=1.05 \textwidth]{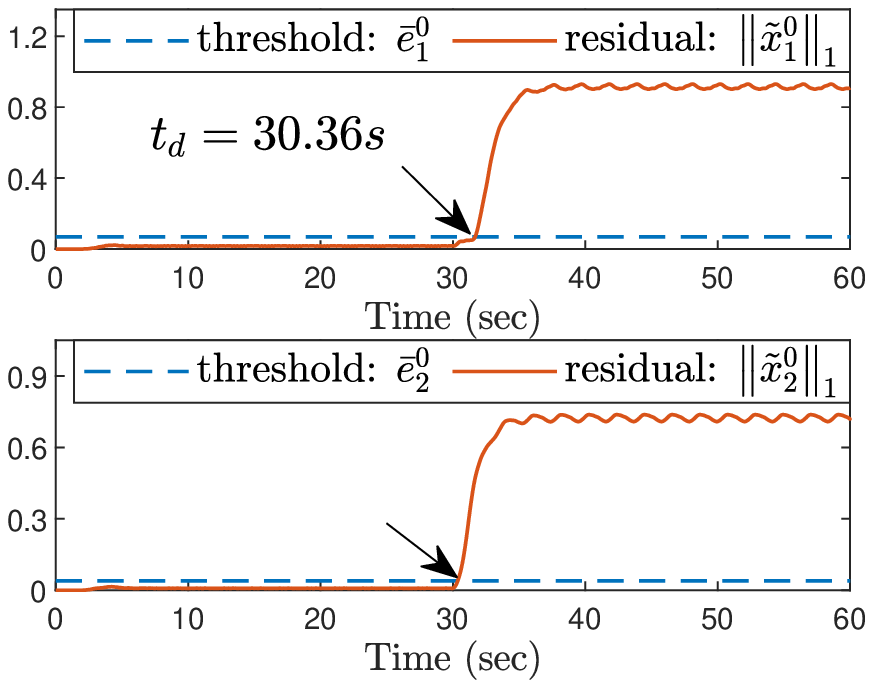}	
	\caption{}
	\label{6_mine-FD}
	\end{subfigure}
	\begin{subfigure}[t]{0.32 \textwidth}
	\includegraphics[width=1.05 \textwidth]{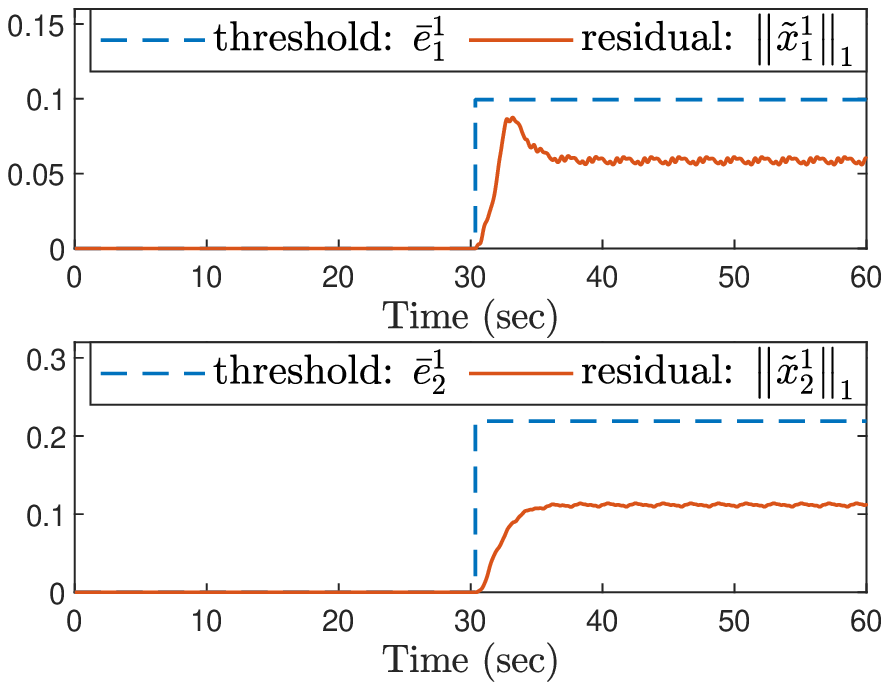}	
	\caption{}
	\label{6_mine-FI1}
	\end{subfigure}
	\begin{subfigure}[t]{0.32 \textwidth}
	\includegraphics[width=1.05 \textwidth]{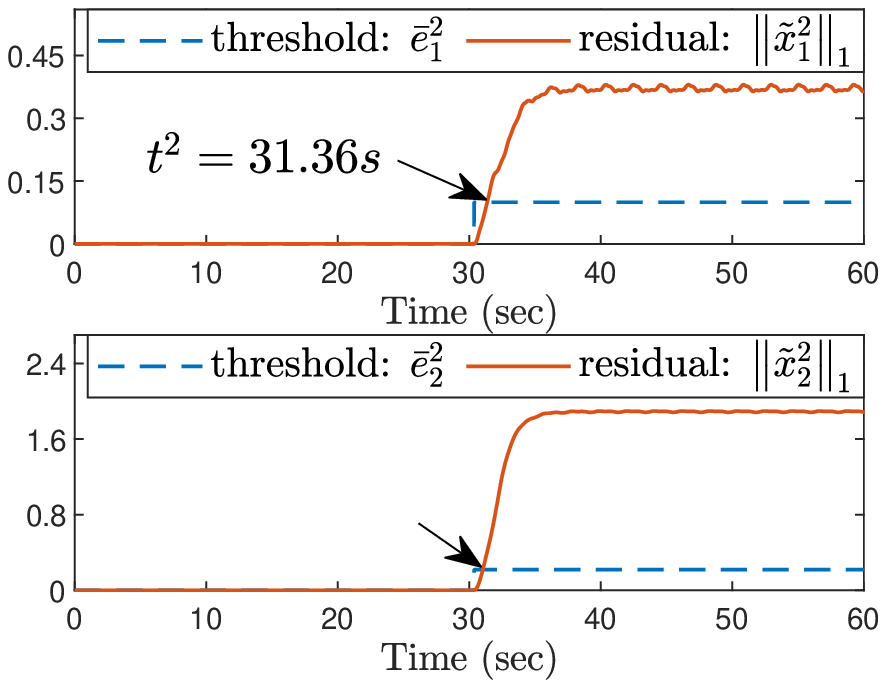}	
	\caption{}
	\label{6_mine-FI2}
	\end{subfigure}
	\caption{ FDI performance with the proposed scheme when actuator fault $1'$ occurs at time $t_0=30s$ (for comparison study): (a) FD residuals and thresholds; (b) $1$-st FI residuals and thresholds; and (c) $2$-nd FI residuals and thresholds. }
	\label{PLOT_compM-fault1}
\end{figure*}

\begin{figure*} [htb!]
\centering
	\begin{subfigure}[t]{0.37 \textwidth}
	\includegraphics[width=1 \textwidth]{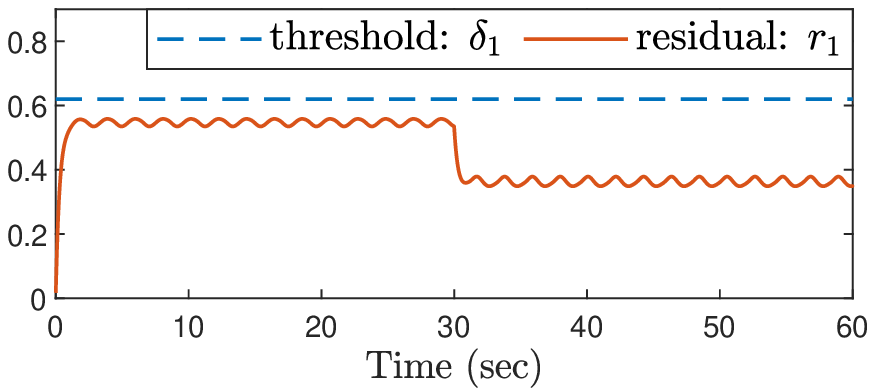}	
	\caption{}
	\label{6_other-FI1}
	\end{subfigure}
	\begin{subfigure}[t]{0.37 \textwidth}
	\includegraphics[width=1 \textwidth]{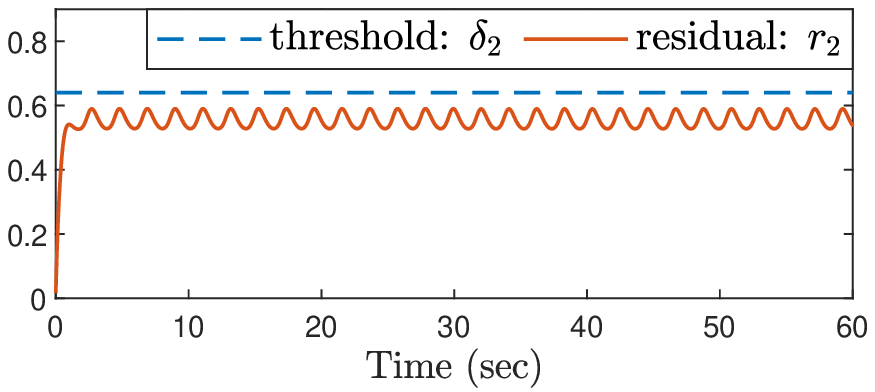}	
	\caption{}
	\label{6_other-FI2}
	\end{subfigure}
	\caption{ FI performance with the method of \cite{ElFG.AIChE07} when actuator fault $1'$ occurs at time $t_0=30s$ (for comparison study): (a) $1$-st FI residual and threshold; and (b) $2$-nd FI residual and threshold. }
	\label{PLOT_compO-fault1}
\end{figure*}

\begin{figure*} [htb!]
\centering
	\begin{subfigure}[t]{0.32 \textwidth}
	\includegraphics[width=1.05 \textwidth]{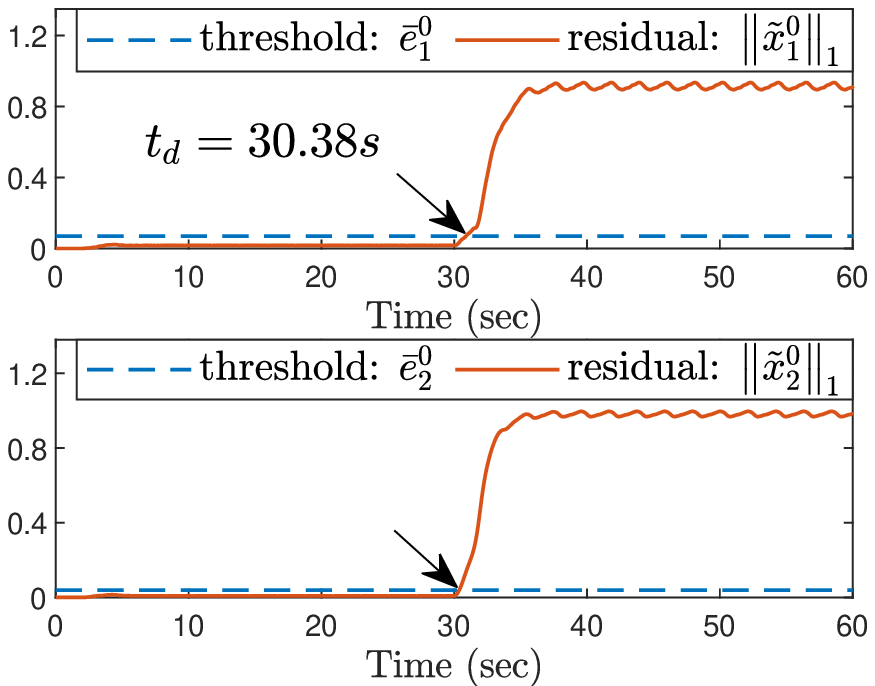}	
	\caption{}
	\label{7_mine-FD}
	\end{subfigure}
	\begin{subfigure}[t]{0.32 \textwidth}
	\includegraphics[width=1.05 \textwidth]{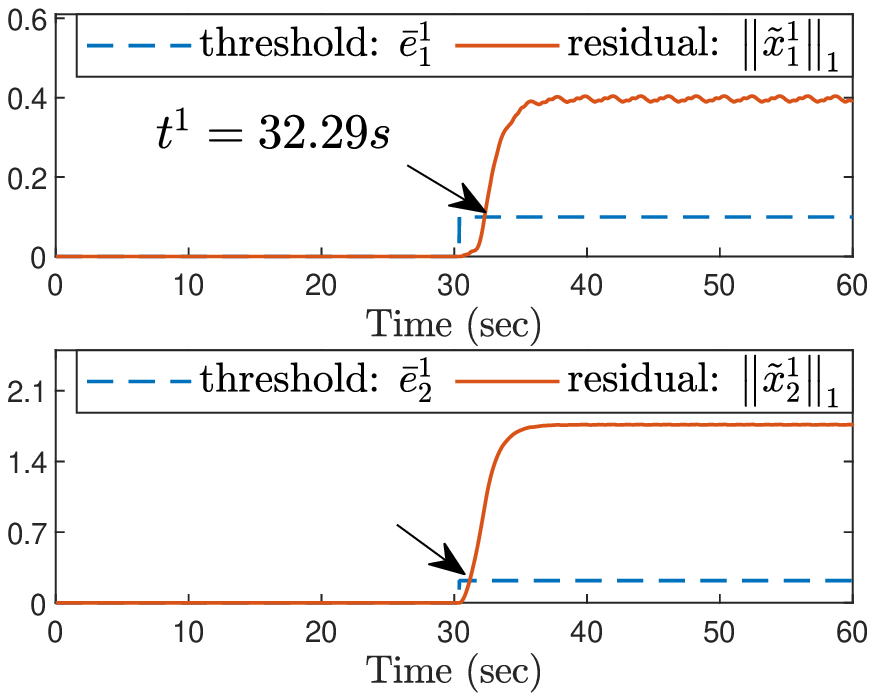}	
	\caption{}
	\label{7_mine-FI1}
	\end{subfigure}
	\begin{subfigure}[t]{0.32 \textwidth}
	\includegraphics[width=1.05 \textwidth]{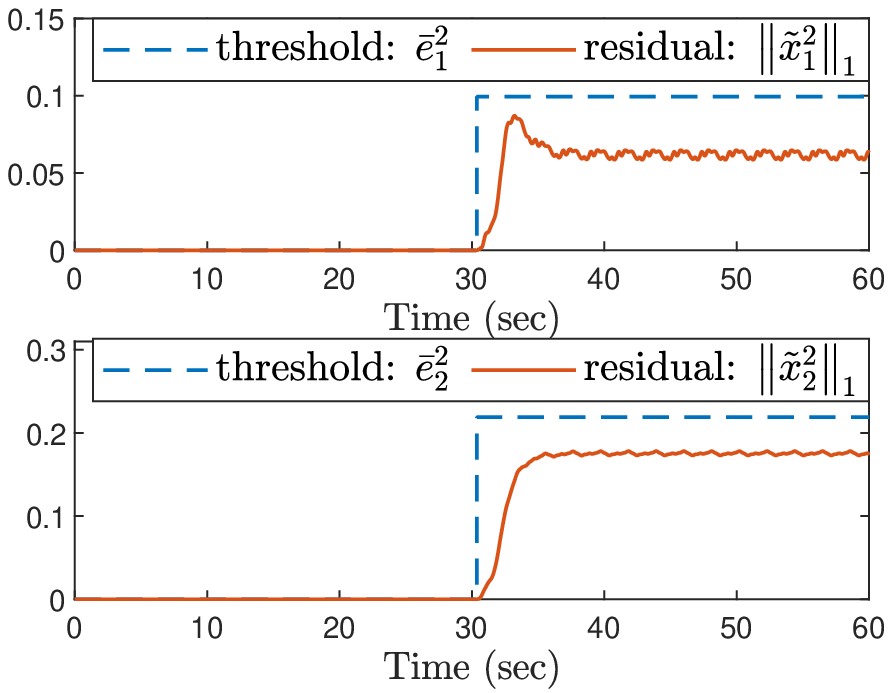}	
	\caption{}
	\label{7_mine-FI2}
	\end{subfigure}
	\caption{ FDI performance with the proposed scheme when actuator fault $2'$ occurs at time $t_0=30s$ (for comparison study): (a) FD residuals and thresholds; (b) $1$-st FI residuals and thresholds; and (c) $2$-nd FI residuals and thresholds. }
	\label{PLOT_compM-fault2}
\end{figure*}

\begin{figure*} [htb!]
\centering
	\begin{subfigure}[t]{0.37 \textwidth}
	\includegraphics[width=1 \textwidth]{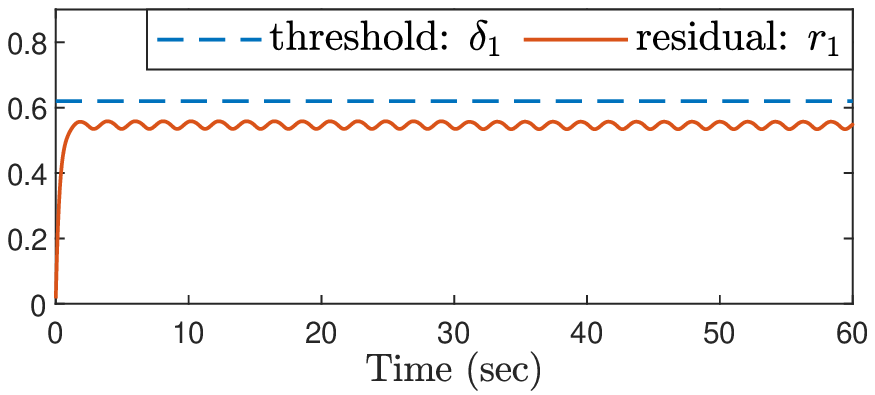}	
	\caption{}
	\label{7_other-FI1}
	\end{subfigure}
	\begin{subfigure}[t]{0.37 \textwidth}
	\includegraphics[width=1 \textwidth]{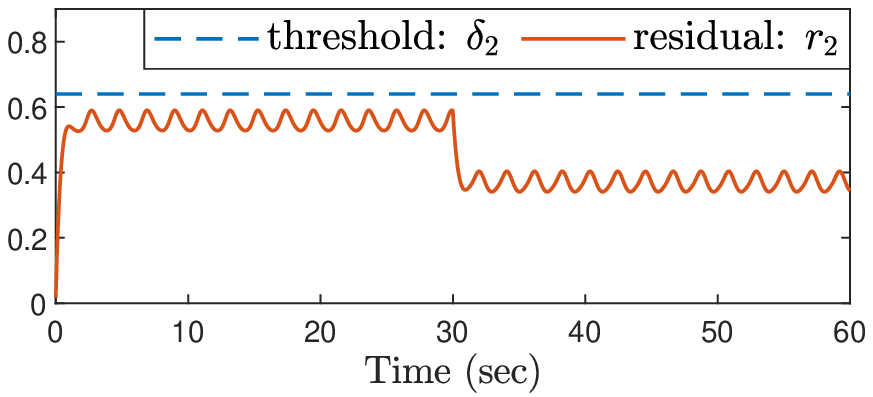}	
	\caption{}
	\label{7_other-FI2}
	\end{subfigure}
	\caption{ FI performance with the method of \cite{ElFG.AIChE07} when actuator fault $2'$ occurs at time $t_0=30s$ (for comparison study): (a) $1$-st FI residual and threshold; and (b) $2$-nd FI residual and threshold. }
	\label{PLOT_compO-fault2}
\end{figure*}

To further justify the advantage of our FI scheme in dealing with the system uncertainty for accurate isolation, we compare the performance of our scheme with the existing method in \cite{ElFG.AIChE07}. 
%
%
A PDE system in the form of (\ref{SIM_sys}) is considered, where the system structures/parameters keep unchanged except $b(z)=[h(z)-h(z-\pi/2),h(z-\pi/2)-h(z-\pi)]$ and $u(t)=[1.1+6\sin(3t); 1.1-6\cos(3t)]$. 
This system model is assumed partially-unknown, i.e., it consists of (known) nominal component $\frac{\partial^2 x}{\partial z^2}+f(x)+\beta_u b(z)u=\frac{\partial^2 x}{\partial z^2}+0.9\beta_T(e^{-\frac{\gamma}{1+x}}-e^{-\gamma})+\beta_u (b(z)u-x)$ and uncertain/unknown component $N(x)=0.1\beta_T(e^{-\frac{\gamma}{1+x}}-e^{-\gamma})$. 
Two types of actuator faults at different locations are considered, i.e., $\phi^k(x,u)=\beta_u b(z)f_a^k(x)$ ($k=1,2$) with $f_a^1(x)=[-0.05x(\frac{\pi}{2},t);0]$ denoting fault 1 and $f_a^2(x)=[0;-0.05x(\frac{\pi}{2},t)]$ denoting fault 2. 
The approximate ODE model of this system is derived with order $m=2$. 
For the FI scheme of \cite{ElFG.AIChE07}, a bank of FI filters (generating FI residuals $r_i(t)$, $i=1,2$) are constructed according to \cite[Eq. (27)]{ElFG.AIChE07} and the corresponding FI thresholds are given as $\delta_1=0.62$ and $\delta_2= 0.64$, which are determined based on the upper bound of system uncertainty $N(x)$, according to \cite[Remark 19]{ElFG.AIChE07}.  
For our scheme, the implementation process follows a similar line established as above, in which the RBF network is constructed with nodes $N_n = 13 \times 9 \times 15 \times 15$, the center evenly spaced on $[14,26] \times [-4,4] \times [-6,8] \times [-6,8]$ and the widths $\nu_i = 1$; and the associated parameters are given as $b_i^0=b_i=1$, $\varrho_i=0.02$ ($i=1,2$), $\xi_1^*=0.0495$, $\xi_2^*=0.0191$, $\bar{\rho}_1^1=0.05$, $\bar{\rho}_1^2=0.2$ and $T=2s$.  
For testing purpose, two test faults, i.e., fault $1'$ with $f_a^{1'}(x)=[-0.043x(\frac{\pi}{2},t);0]$ and fault $2'$ with $f_a^{2'}(x)=[0;-0.043x(\frac{\pi}{2},t)]$, are considered occurring at time $t_0=30s$.  
Specifically, considering the case when fault $1'$ occurs, with our scheme, it can be seen in Fig. \ref{PLOT_compM-fault1} that the occurring fault $1'$ can be detected at time $t_d=30.36s$ and be identified similar to fault $1$ at time $t_{iso}=t^2=31.36s$. 
With the scheme of \cite{ElFG.AIChE07}, it is shown in Fig. \ref{PLOT_compO-fault1} that after fault occurrence time $t_0=30s$, the matched FI residual $r_1(t)$ do not increase and cross the associated FI threshold $\delta_{1}$, indicating that isolation for fault $1'$ cannot be achieved.  
%
For the case of fault $2'$, similar observations can be seen in Figs. \ref{PLOT_compM-fault2}--\ref{PLOT_compO-fault2}, where fault $2'$ can be detected at $t_d=30.38s$ and isolated at $t_{iso}=32.29s$ with our scheme; but isolation failed with the scheme of \cite{ElFG.AIChE07}. 
For such results, one important reason lies in that: the FI method of \cite{ElFG.AIChE07} cannot deal with the effect of the system uncertainty $N(x)$ during the FI process, such that the occurring fault dynamics $\phi^k(x,u)$ are hidden within the uncertain dynamics $N(x)$ and cannot be captured for successful isolation; while our method has successfully overcome this issue by achieving accurate identification of system uncertainty $N(x)$. 
These comparison results are consistent with the discussions in Remarks \ref{rem_FDestimator} and \ref{rem_FIestimator}, demonstrating the advantage of our FI scheme compared to that of \cite{ElFG.AIChE07}.

\begin{remark}
Comparison study for the proposed FD scheme has been performed in our preliminary  work \cite{ZhaY.CSL20}, which is thus not repeated here.  
\end{remark}

\section{Conclusions} \label{sec_conclusion}

In this paper, we have proposed a novel FDI scheme for a class of uncertain nonlinear parabolic PDE systems. 
The design was based on an approximate ODE system derived via the Galerkin method, which is used to capture the dominant dynamics of the original PDE system. 
Specifically, based on the ODE system, a DL-based adaptive dynamics learning approach was first developed to achieve locally-accurate identification of the system uncertain dynamics under normal and all faulty modes. The learned knowledge was obtained and stored in constant RBF NN models. 
Then, a bank of FDI estimators can be designed with these models. In particular, the FD estimators are used to detect the occurrence of a fault; while the FI estimators, which will be activated once the fault is detected, are used to identify the type of occurring fault. 
The thresholds associated with these estimators were further designed for real-time decision making. 
The associated analysis on FDI performance, i.e., fault detectability and isolatability conditions, has also been provided. 
Simulation studies have been conducted to verify the effectiveness and advantage of the proposed methodologies. 

\end{document}